\newcommand{\ba}{\begin{array}}
\newcommand{\ea}{\end{array}}
\newcommand{\beq}{\begin{equation}}
\newcommand{\eeq}{\end{equation}}
\newcommand{\ben}{\begin{enumerate}}
\newcommand{\een}{\end{enumerate}}
\newcommand{\bit}{\begin{itemize}}
\newcommand{\eit}{\end{itemize}}
\newtheorem{prop}{Proposition}
\newtheorem{theorem}{Theorem}
\begin{document}
\title{Geometric formulation and multi-dark soliton solution to the
defocusing
complex short pulse equation}
\author{Bao-Feng Feng \\
\textsl{School of Mathematical and Statistical Sciences} \\
\textsl{The University of Texas Rio Grande Valley}
\and Ken-ichi Maruno \\
\textsl{Department of Applied Mathematics} \\
\textsl{Waseda University, Shinjuku-ku, Tokyo 169-8555, Japan} \\
\and Yasuhiro Ohta \\
\textsl{Department of  Mathematics} \\
\textsl{Kobe University, Rokko, Kobe 657-8501, Japan
}}
\maketitle

\begin{abstract}
In the present paper, we study the defocusing complex short pulse (CSP) equations both geometrically and algebraically. From the
geometric point of view, we establish a link of the complex coupled
 dispersionless (CCD) system with the motion of space curves in
 Minkowski space $\mathbf{R}^{2,1}$, then with the defocusing CSP equation via a
 hodograph (reciprocal) transformation, the Lax pair is constructed
 naturally for the defocusing CSP equation. We also show that
 the CCD system of both the focusing and defocusing types can be derived
 from
 the fundamental forms of surfaces such that their curve flows are
 formulated.
 In the second part of the paper, we derive the the defocusing CSP
 equation from the single-component extended KP hierarchy by the reduction
 method. As a by-product, the $N$-dark soliton solution for the defocusing CSP equation in the form of determinants for these equations is provided.
\end{abstract}
\section{Introduction}
Recently, a complex short pulse (CSP)
equation~\cite{Feng_ComplexSPE,KYKPRE14}
\begin{equation}
q_{xt}+q+\frac{1}{2}\left(|q|^{2}q_{x}\right) _{x}=0\,,  \label{CSP}
\end{equation}%
was proposed as an improvement for the short pulse (SP) equation
\begin{equation}
u_{xt}=u+\frac{1}{6}\left( u^{3}\right) _{xx}\,,  \label{SPE}
\end{equation}%
proposed by Sch\"{a}fer and Wayne~\cite{SPE_Org} to describe the propagation of ultra-short optical pulses in nonlinear media.
In contrast with a real-valued function for $u=u(x,t)$ in Eq.(\ref{SPE}), $q=q(x,t)$ in Eq.(\ref{CSP}) is a complex-valued function. Since the complex-valued function can contain the information of both amplitude and phase, it
is more appropriate for the description of the optical waves~\cite{Yarivbook}.
It was shown that the CSP equation (\ref{CSP}) is integrable in the sense that it admits a Lax pair and multi-soliton solutions~\cite{Feng_ComplexSPE,FengShen_ComplexSPE,FLZPhysD,FMO-PJMI}
In contrast with no physical interpretation for one-soliton solution
(loop soliton) to the SP equation~\cite{Sakovich2,Matsuno_SPE},
the one-soliton solution for the CSP equation is an envelope soliton with a few
optical cycles~\cite{Feng_ComplexSPE}. Besides the envelop soliton solution, the CSP equation possesses rogue wave solution of any order in analogue to the nonlinear Schr\"odinger (NLS) equation~\cite{FLZPhysD}.

The CSP equation can be viewed as an analogue of the NLS equation
in the ultra-short regime when the width of optical pulse is of the order $10^{-15} s$.  It is well known that
the NLS equation describes the evolution of slowly varying wave packets waves in weakly nonlinear dispersive media
under quasi-monochromatic assumption, which has been very successful in
many applications such as nonlinear optics and water waves~\cite{Yarivbook,Kodamabook,Agrawalbook,Ablowitzbook}. However, as the
width of optical pulse is in the order of femtosecond ($10^{-15}$ s),
then the width of spectrum of this ultra-short pulses is approximately
of the order $10^{15} s^{-1}$, the monochromatic assumption to derive
the NLS equation is not valid~\cite{Rothenberg}. Description of
ultra-short pulses
requires a modification of standard slow varying envelope models based
on the NLS equation.
That is the motivation for the study of the short pulse equation, the CSP equation and their coupled models.

The CSP equation is mathematically related to a two-component short
pulse (2-SP) equation proposed by Dimakis and
M\"uller-Hoissen~\cite{Hoissen_CSPE}
and Matsuno~\cite{Matsuno_CSPE} independently. If we
take $u={\text{Re}} (q)$ and $v={\text{Im}} (q)$, then the 2-SP equation
in \cite{Hoissen_CSPE,Matsuno_CSPE} becomes the CSP equation. We showed
that the CSP equation can be derived from the motion of space curves and provide an
alternative multi-soliton solution in terms of determinant based on the KP
hierarchy reduction technique~\cite{FengShen_ComplexSPE}. The multi-breather and higher order rogue wave solutions to the CSP equation were constructed by the Darboux transformation method in \cite{FLZPhysD}. Furthermore, we have constructed integrable semi-discrete CSP equation and apply it as a self-adaptive moving mesh method for the numerical simulations of the CSP equation~\cite{FMO-PJMI}.

Since the NLS equation has the focusing and defocusing cases, which admits the bright and dark type soliton solutions, respectively. It is natural that the CSP equation can also have the focusing and defocusing type, which may be proposed as
\begin{equation}
q_{xt}+q+\frac{1}{2} \sigma \left( |q|^{2}q_{x}\right) _{x}=0\,,  \label{gCSP}
\end{equation}
where $\sigma=1$ represents the focusing case, and $\sigma=-1$ stands for the defocusing case.
It turns out that this is indeed the case.
Same as the focusing CSP equation discussed in \cite{Feng_ComplexSPE,FengShen_ComplexSPE}, the defocusing CSP equation can also
occur in nonlinear optics when ultra-short pulses propagate in a nonlinear
media of defocusing type~\cite{FenglingzhuPRE}.

In the present paper, we will study the defocusing CSP
equation
\begin{equation}
q_{xt}+q-\frac{1}{2}\left( |q|^{2}q_{x}\right) _{x}=0\,,
\label{dCSP}
\end{equation}
both geometrically and algebraically. The goal of the present paper is twofold.
The one is to investigate the geometric meaning of the defocusing CSP
equation,
especially, its connection with the motion of space curves.
The other is to find out how these
equations are reduced from the
extended KP hierarchy and to construct their
$N$-soliton solutions. The remainder of this paper is organized as follows. In
section 2, we firstly establish the connection between the motions of space curves in Minkowski space $\mathbf{R}^{2,1}$
and the defocusing CSP equation via a hodograph (reciprocal) transformation.
The Lax pair is constructed geometrically to assure the integrability of
the defocusing CSP equation. Then, starting from the fundamental forms
of the surfaces embedded in $\mathbf{R}^3$ and $\mathbf{R}^{2,1}$, the focusing and
defocusing complex coupled dispersionless (CCD) system
are derived, respectively. The curve flows are also made clear.
In section 3, starting from a set
of bilinear equations for the single-component extended KP hierarchy, as well as
their tau functions, we deduce the defocusing CSP equation by the KP hierarchy reduction method. Meanwhile, as a
by-product, the $N$-dark soliton solution is obtained. Section 4 is
devoted to concluding remarks.

\section{Geometric Formulations}
It has been known for several decades that there are deep connections between the differential geometry and the theory of integrable systems, and various integrable differential or difference equations arise from either curve dynamics or surfaces. For example, the sine-Gordon (sG) equation, the modified KdV (mKdV) equation and the NLS equation from the compatibility conditions of the motion of either plane or space curves, as shown by many researchers including Lamb and Hasimoto~\cite{Lamb,Hasimoto,GoldsteinPRL,NakayamaPRL,DoliwaPLA94,NakayamaJPSJ98,Calini00,Ivey,QuChouPhyD}. On the other hand, a more broad class of soliton equations including the ones mentioned above can also be derived from the theory of surfaces (see the pioneer work in \cite{Sasaki79,Chern80,Symsoliton,Tafel,SymVII,Symloopsoliton,Terng97} and also the book by Roger and Schief~\cite{RogerSchiefbook}). In this section, a link between the CSP equation and the motion of curves, as well as surfaces, in three-dimensional space is established.
\subsection{The link with the motion of space curves in Minkowski space}
In the study of curve flows of soliton equations, it has shown in the past that, very often, we have to turn from the Euclidean space to Minkowski space when we attempt to find the links between some soliton equations such as the defocusing NLS equation~\cite{NakayamaJPSJ98,Anco16}. This is also the case when we attempt to establish a link of the defocusing CSP equation (\ref{dCSP}) to the motion of space curves, as will be shown in this subsection.
Firstly, noting 
that the CSP equation (\ref{gCSP}) admits the following conservative law
\begin{equation} \label{gCSP_conv}
(\sqrt{1+\sigma|q_x|^2})_t + \frac{1}{2} \sigma (|q|^2\sqrt{1+\sigma|q_x|^2})_x=0\,,
\end{equation}
which allows us to define a hodograph (reciprocal) transformation
\begin{equation}
\mathrm{d}s=\rho^{-1} \mathrm{d}x - \frac{1}{2} \sigma  \rho^{-1} {|q|^2} \mathrm{d%
}t,\quad \, \mathrm{d}y=-\mathrm{d}t,
\end{equation}
where $\rho^{-1}=\sqrt{1+\sigma|q_x|^2}$.
By doing so, one can obtain the differential conversion formula between $(x,t)$ and $(y,s)$
\begin{equation}
\partial_x =  \rho^{-1} \partial_s\, \quad \partial_t=-\partial_y- \frac{1}{2}  \sigma \rho^{-1} {|q|^2} \partial_s\,,
\end{equation}
or
\begin{equation}
\partial_s =  \rho \partial_x\, \quad \partial_y=-\partial_t- \frac{1}{2}  \sigma {|q|^2} \partial_x\,.
\end{equation}
Therefore, the CSP equation  (\ref{gCSP}) is converted into
\begin{equation}  \label{CCDa}
q_{ys}=\rho q\,,
\end{equation}
while the conservative form of the CSP (\ref{gCSP_conv})
\begin{equation}  \label{CCDb}
\rho_y + \frac{1}{2}\sigma (|q|^2)_s=0\,.
\end{equation}

We remark here that equations (\ref{CCDa}) and (\ref{CCDb}) constitute a
coupled nonlinear system. When $s$ is viewed as a spatial variable, and
$y$ as a temporal variable, the system for $\sigma=1$ is called the
focusing complex coupled dispersionless (CCD) system, which has been
studied in \cite{KonnoKakuhata2} and related references.
For some reason, the system (\ref{CCDa}) and (\ref{CCDb}) for
$\sigma=-1$ has been overlooked in the past and its soliton solution has not
been studied yet. In what follows, we reformulate the system
(\ref{CCDa}) and (\ref{CCDb}) with $\sigma=-1$ geometrically
in order to make clear the geometric interpretation of the defocusing CSP equation (\ref{dCSP}).

It can be easily checked that the quantity $\rho^2+\sigma |q_s|^2$ is independent of $y$, thus, we can assume
$\rho^2+\sigma |q_s|^2=1$ without loss of generality. Under this case, if we assume $Q=q_s$, the system can be simplified into a single equation
\begin{equation}  \label{CsG_general}
\left(\frac{Q_y}{\sqrt{1-\sigma|Q|^2}} \right)_s=Q\,,
\end{equation}
which is called the complex sine-Gordon equation for $\sigma=1$~\cite{Symsoliton,SymVII}. Eq. (\ref{CsG_general}) is also a reduction of a so-called vector sine-Gordon equation~\cite{JWang1}. If $Q=q_s$ is a real-valued function,
the complex sine-Gordon equation (\ref{CsG_general}) with $\sigma=1$
leads to the sine-Gordon equation $\theta_{ys}=\sin \theta$ by setting $Q=q_s=\sin
\theta$ and $\rho=\cos \theta$.
In the case of $\sigma=-1$, if $Q=q_s$ is a real-valued function, we
obtain
the sinh-Gordon equation $\theta_{ys}=\sinh \theta$ from (\ref{CsG_general}) by setting $Q=q_s=\sinh
\theta$ and $\rho=\cosh \theta$. Thus we can call (\ref{CsG_general}) the
complex sine-Gordon equation for $\sigma=1$ and complex sinh-Gordon equation for $\sigma=-1$.

It was pointed out in \cite{FengShen_ComplexSPE,FMO-PJMI} that the
focusing CCD system
is linked to the focusing CSP equation by a
hodograph (reciprocal) transformation, by which the Lax pair of the
focusing CSP equation was established.
Both the generalized CD equation and the focusing CSP equation were
interpreted
as the motion of space curve in Euclidean
space~\cite{FengShen_ComplexSPE}.
In what follows, we proceed to the study for the relationship
between the defocusing CSP equation (\ref{dCSP}) and the motion of space curves.
To this end,  we need to turn to consider the motion of curves which
lies in a surface $S$ embedded in a Minkowski space $\mathbf{R}^{2,1}$ equipped
with a Lorentz metric
$dl^2=-dx_{1}^{2}+dx_{2}^{2}+dx_{3}^{2}$.
For any $\vec{x}=(x_{1},x_{2},x_{3})$, $\vec{y}=(y_{1},y_{2},y_{3})$ in $\mathbf{R}^{2,1}$, the scalar product
is defined as $\langle \vec{x},\vec{y}\rangle=-x_{1}y_{1}+x_{2}y_{2}+x_{3}y_{3}$ and the vector
product is defined as $\vec{x}\times \vec{y}=(x_{3}y_{2}-x_{2}y_{3},x_{3}y_{1}-x_{1}y_{3},x_{1}y_{2}-x_{2}y_{1})$.
Moreover, we use the Darboux frame $\{\mathbf{T},\mathbf{N},\mathbf{t}\}$ attached to a curve $\vec{\textbf{r}} (y,s)$ parameterized by the arc-length:
\begin{eqnarray*}
\mathbf{T} &=&\vec{\textbf{r}}_{y}\qquad \text{ (the unit tangent vector)}\,, \\
\mathbf{N} &=&\mathbf{N(\vec{\textbf{r}} )}\qquad \text{ (the unit normal vector
 of a surface $S$)}\,, \\
\mathbf{t} &=&\mathbf{N\times T}\qquad \text{(the tangent normal vector)}\,,
\end{eqnarray*}%
where $y$ stands for the arc length and $s$ represents the time.
The general equation (the Darboux equation) for the orthogonal triad $\{\mathbf{T},\mathbf{N},\mathbf{t}\}$ along the curve takes
the form
\begin{equation}
\left[
\begin{array}{c}
\mathbf{T} \\
\mathbf{t} \\
\mathbf{N}%
\end{array}%
\right] _{y}=\left[
\begin{array}{ccc}
0 & \kappa _{g} & \kappa _{n} \\
\kappa _{g} & 0 & \tau _{r} \\
\kappa _{n} & -\tau _{r} & 0%
\end{array}%
\right] \left[
\begin{array}{c}
\mathbf{T} \\
\mathbf{t} \\
\mathbf{N}%
\end{array}%
\right]\,,
\end{equation}%
where $\kappa_g$ is the geodesic curvature, $\kappa_n$ is the normal
curvature and $\tau_r$ is the relative torsion (geodesic torsion)
while the general temporal evolution of $\gamma$ can be expressed as
\begin{equation}
\left[
\begin{array}{c}
\mathbf{T} \\
\mathbf{t} \\
\mathbf{N}%
\end{array}%
\right] _{s}=\left[
\begin{array}{ccc}
0 & \alpha  & \beta  \\
\alpha  & 0 & \gamma  \\
\beta  & -\gamma  & 0%
\end{array}%
\right] \left[
\begin{array}{c}
\mathbf{T} \\
\mathbf{t} \\
\mathbf{N}%
\end{array}%
\right]\,.
\end{equation}%
The compatibility conditions lead to the following system
\begin{equation}
\kappa_{g,s}=\alpha _{y}+\kappa _{n}\gamma -\tau _{r}\beta \,,  \label{3Dcurve_Integrabilitym1}
\end{equation}%
\begin{equation}
\kappa_{n,s}=\beta _{y}-\kappa _{g}\gamma +\tau _{r}\alpha \,,
\label{3Dcurve_Integrabilitym2}
\end{equation}%
\begin{equation}
\tau_{r,s}=\gamma _{y}-\kappa _{g}\beta +\kappa _{n}\alpha \,.  \label{3Dcurve_Integrabilitym3}
\end{equation}%
Combining Eq.(\ref{3Dcurve_Integrabilitym1}) with Eq.(\ref{3Dcurve_Integrabilitym2}), we have
\begin{equation}
(\kappa _{g}+\mathrm{i}\kappa _{n})_{s}=(\alpha +\mathrm{i}\beta )_{y}-%
\mathrm{i}\gamma (\kappa _{g}+\mathrm{i}\kappa _{n})+\mathrm{i}\tau
_{r}(\alpha +\mathrm{i}\beta )\,. \label{3Dcurve_Integrabilitym4}
\end{equation}
If we choose
\begin{equation}
\kappa _{g}+\mathrm{i}\kappa _{n}=-\mathrm{i}q\,,\quad \tau _{r}=-c^{-1} \,,
\label{CCD_assup1}
\end{equation}%
\begin{equation}
\alpha +\mathrm{i}\beta =cq_{s}\,,\quad \gamma =c\rho\,,
\label{CCD_assup2}
\end{equation}
then Eq.(\ref{3Dcurve_Integrabilitym4}) becomes
\begin{equation}
\label{CCDa1}
q_{ys}=\rho q.
\end{equation}%
On the other hand, since
\begin{eqnarray*}
\frac{1}{2}\left( |q|^{2}\right) _{s} &=&\frac{1}{2}(qq_{s}^{\ast
}+q_{s}q^{\ast }), \\
&=&\frac{1}{2c}\left( \mathrm{i}(\alpha -\mathrm{i}\beta )(\kappa _{g}+%
\mathrm{i}\kappa _{n})-\mathrm{i}(\kappa _{g}-\mathrm{i}\kappa _{n})(\alpha +%
\mathrm{i}\beta )\right) , \\
&=&-\frac{1}{c}(\kappa _{n}\alpha -\kappa _{g}\beta ),
\end{eqnarray*}%
one has
\begin{equation}
\rho_{y}-\frac{1}{2}(|q|^{2})_{s}=0\,,
\label{CCDb1}
\end{equation}%
from Eq.(\ref{3Dcurve_Integrabilitym3}). Thus the link of the defocusing CCD
system to the motion of space curves in Minkowski space $\mathbf{R}^{2,1}$ is established.
Recall the Lie group
\begin{equation*}
SU(1,1)= \{g \in SL(2, C) \mid g^*Jg=J\}\,,
\end{equation*}%
where $J={\text{diag}}(1, -1)$, and the Lie algebra
\begin{equation*}
su(1,1)= \left\{
\begin{pmatrix}
\mathrm{i}x  & z \\
z^* & -\mathrm{i}x
\end{pmatrix} \mid x  \in R, z \in C \right\}\,.
\end{equation*}%
We choose the basis of $su(1,1)$ as
\begin{equation*}
\mathrm{\tilde{e}}_{1}=\frac{1}{2}\left(
\begin{array}{cc}
\mathrm{i} & 0 \\
0 & -\mathrm{i}%
\end{array}%
\right) ,\quad \mathrm{\tilde{e}}_{2}=\frac{1}{2}\left(
\begin{array}{cc}
0 & 1 \\
1 & 0%
\end{array}%
\right) ,\quad \mathrm{\tilde{e}}_{3}=\frac{1}{2}\left(
\begin{array}{cc}
0 & \mathrm{i} \\
-\mathrm{i} & 0%
\end{array}%
\right)\,,
\end{equation*}
which satisfies the communication relation
\begin{equation*}
[\mathrm{\tilde{e}}_{1}, \mathrm{\tilde{e}}_{2}]=\mathrm{\tilde{e}}_{3}\,, \quad
[\mathrm{\tilde{e}}_{2},\mathrm{\tilde{e}}_{3}]=-\mathrm{\tilde{e}}_{1}\,, \quad
[\mathrm{\tilde{e}}_{3}, \mathrm{\tilde{e}}_{1}]=\mathrm{\tilde{e}}_{2}\,.
\end{equation*}
We identify $\mathbf{R}^{2,1}$ as $su(1,1)$ by $r=-Z \mathrm{\tilde{e}}_{1} +Y \mathrm{\tilde{e}}_{2}-X \mathrm{\tilde{e}}_{3} \rightarrow \vec{\mathbf{r}}=(X,Y,Z)^T$, then
\begin{equation}
\langle \vec{r}, \vec{w}\rangle = -2{\text{tr}}(rw)\,, \quad \vec{r} \times \vec{w}= r \times w\,.
\end{equation}
On the other hand, recall the Lie group
\begin{equation*}
SO(1,2)= \{A \in SL(3) \mid A^TI_{1,2}A=I_{1,2}\}\,,
\end{equation*}%
where $I_{1,2}={\text{diag}} (-1, 1,1)$, and the Lie algebra
\begin{equation*}
so(1,2)=
\left\{ \begin{pmatrix}
0 & X & Y \\ X & 0 & Z \\ Y & -Z & 0
\end{pmatrix} \mid X, Y,Z \in R \right\}\,,
\end{equation*}%
we choose the basis of $so(1,2)$ as
\begin{equation*}
\mathrm{\tilde{L}}_{1}=\left(
\begin{array}{ccc}
0 & 0 & 0 \\
0 & 0 & -1 \\
0 & 1 & 0%
\end{array}%
\right) ,\quad
\mathrm{\tilde{L}}_{2}=\left(
\begin{array}{ccc}
0 & 0 & 1 \\
0 & 0 & 0 \\
1 & 0 & 0%
\end{array}%
\right) ,\quad
\mathrm{\tilde{L}}_{3}=\left(
\begin{array}{ccc}
0 & -1 & 0 \\
-1 & 0 & 0 \\
0 & 0 & 0%
\end{array}%
\right)\,,
\end{equation*}
which satisfies the communication relation
\begin{equation*}
[\mathrm{\tilde{L}}_{1}, \mathrm{\tilde{L}}_{2}]=\mathrm{\tilde{L}}_{3}\,, \quad
[\mathrm{\tilde{L}}_{2},\mathrm{\tilde{L}}_{3}]=-\mathrm{\tilde{L}}_{2}\,, \quad
[\mathrm{\tilde{L}}_{3}, \mathrm{\tilde{L}}_{1}]=\mathrm{\tilde{L}}_{2}\,.
\end{equation*}
We identify $\mathbf{R}^{2,1}$ as $so(1,2)$ by $r=-Z \mathrm{\tilde{L}}_{1} +Y \mathrm{\tilde{L}}_{2}-X \mathrm{\tilde{L}}_{3}
\rightarrow \vec{\mathbf{r}}=(X,Y,Z)^T$.
Obviously, there is an isomorphism between the Lie algebras $su(1,1)$ and
$so(1,2)$ which is reflected by the correspondence $\mathrm{\hat{L}}_{j}\leftrightarrow \mathrm{\tilde{e}}_{j}$ ($%
j=1,2,3$).
Based on this fact, we can easily construct the Lax pair for the defocusing CCD system
geometrically as follows
\begin{equation}
\Psi _{y}=U\Psi ,\quad \Psi _{s}=V\Psi \,,
\end{equation}%
where
\begin{eqnarray}
U &=&-\kappa _{g}\mathrm{\tilde{e}}_{3}+\kappa _{n}\mathrm{\tilde{e}}_{2}-\tau _{r}
\mathrm{\tilde{e}}_{1}  \nonumber\\
&=&\frac{1}{2}\left(
\begin{array}{cc}
-\mathrm{i}\tau _{r} & \kappa _{n}-\mathrm{i}\kappa _{g} \\
\kappa _{n}+\mathrm{i}\kappa _{g} & \mathrm{i}\tau _{r}%
\end{array}%
\right) \, \nonumber\\
&=&  \left(
\begin{array}{cc}
\frac 12 \mathrm{i}\lambda & -\frac{1}{2}q \\
-\frac{1}{2}q^{\ast } & -\frac 12 \mathrm{i}\lambda%
\end{array}%
\right)\,,
\end{eqnarray}%
\begin{eqnarray}
V &=&-\alpha \mathrm{\tilde{e}}_{3}+\beta \mathrm{\tilde{e}}_{2}-\gamma\mathrm{\tilde{e}}_{1} \nonumber\\
&=&\frac{1}{2}\left(
\begin{array}{cc}
-\mathrm{i}\gamma & \beta -\mathrm{i}\alpha \\
\beta +\mathrm{i}\alpha & \mathrm{i}\gamma%
\end{array}%
\right) \, \nonumber\\
&=& -\frac{\mathrm{i}}{2\lambda}\left(
\begin{array}{cc}
\rho & q_{s} \\
-q_{s}^{\ast } & -\rho%
\end{array}%
\right)\,,
\end{eqnarray}
by setting $c^{-1}=\lambda$.
The above Lax pair is consistent with the
one used for constructing the Darboux transformation of the defocusing CCD system~\cite{FenglingzhuPRE}. The Lax pair found geometrically here shows that the CCD system is simply the negative order of the AKNS hierarchy~\cite{CsG1,PavlovPLA,DajunPhysD}.

It is known that there exists a relationship between the Frenet-Serret
frame and the Darboux frame in 3-dimensional Euclidean space $\mathbf{R}^3$:
\begin{equation*}
\kappa_g = \kappa \cos \alpha\,, \ \ \kappa_n = -\kappa \sin \alpha\,, \ \ \frac{d \alpha}{d y} = \tau -\tau_r\,,
\end{equation*}
where $\alpha$ is the rotation angle in the tangent plane from the Frenet-Serret frame to the Darboux frame. Therefore,
we have
\begin{equation}
q= \mathrm{i} (\kappa_g + \mathrm{i} \kappa_n) = \kappa e^{-\mathrm{i} \alpha + \mathrm{i} \pi/2} = \kappa e^{-\mathrm{i} \int \tau dy + \mathrm{i} (c^{-1}y+\pi/2)} \,.
\end{equation}
The above formula can be viewed as Hasimoto transformation for the case of the CCD system.

The reciprocal link between the CCD system and the
CSP equation can be defined geometrically.
Putting $c=1$, we define
\begin{equation}
x=Z=\int_{s_0}^{s}\gamma (y,s^{\prime })ds^{\prime
}=\int_{s_0}^{s}\rho (y,s^{\prime })ds^{\prime }\,,\quad t=-y\,,
\end{equation}
and
\begin{equation}
q=X+\mathrm{i}Y=\int_{s_0}^{s} (\alpha +\mathrm{i}\beta) ds^{\prime
}=\int_{s_0}^{s} q_{s^{\prime}}(y,s^{\prime})ds^{\prime }\,,
\end{equation}
where $s_0$ in the arc-length parameter $s$ corresponds to the origin of $x$-coordinate.
It can be easily shown that
\begin{equation*}
\frac{\partial x}{\partial s}=\rho \,, \ \ \frac{\partial x}{\partial y}=\int_{s_0}^{s}\rho _{y}(y,s^{\prime })ds^{\prime }=\frac{1}{2}\int_{s_0}^{s}(|q|^{2})_{s^{\prime}}ds^{\prime }=\frac{1}{2}|q|^{2}\,,
\end{equation*}%
which realize the hodograph (reciprocal) transformation mentioned in the
previous section. Therefore, we have a geometric interpretation for the CSP equation, that is, the CSP equation represents the same integrable curve flow as the CCD system in either $\mathbf{R}^3$ (focusing case) or $\mathbf{R}^{2,1}$ (defocusing case), when $Z$-coordinate becomes one independent variable (spatial one) via the hodograph (reciprocal) transformation, $X$- and $Y$- coordinates are interpreted as the real part and imaginary part of the dependent variable $q$.

Under this hodograph (reciprocal) transformation, we can obtain the Lax pair for the defocusing CSP equation (\ref{dCSP}) as follows
\begin{equation}
\Psi _{x}=P\Psi ,\quad \Psi _{t}=Q\Psi \,,
\end{equation}%
where
\begin{equation}
P=\rho ^{-1}V=-\frac{\mathrm{i}}{2\lambda} \left(
\begin{array}{cc}
1 & q_{x} \\
-q_{x}^{\ast } & -1%
\end{array}%
\right) \,,
\end{equation}

\begin{equation}
Q=\frac{1}{2}|q|^{2}P-U=\left(
\begin{array}{cc}
-\frac{\mathrm{i}}{2} \lambda-\frac{\mathrm{i}}{4\lambda}|q|^{2} &
-\frac{\mathrm{i}}{4\lambda}|q|^{2}q_{x}+\frac 12 q \\
\frac{\mathrm{i}}{4\lambda}|q|^{2}q_{x}^{\ast}+\frac{1}{2}q^{\ast} &
\frac{\mathrm{i}}{2} \lambda+\frac{\mathrm{i}}{4\lambda }|q|^{2}%
\end{array}%
\right) \,.
\end{equation}
\subsection{The link with surfaces embedded in space}
Prior to the further pursuit of geometric meaning of the defocusing CSP
 equation, we turn to reveal the geometric interpretation of the
 focusing CCD system. If we interchange $y$ and $s$ and take $\lambda
 \to - 1/(2\lambda)$, then the Lax representation for the CCD system
 of focusing type~\cite{FengShen_ComplexSPE} can be cast into
\begin{equation*}
\Psi _{y}=U\Psi ,\quad \Psi _{s}=V\Psi \,,
\end{equation*}%
\begin{equation*}
U =\left(
\begin{array}{cc}
-\frac{1}{2} \mathrm{i} \lambda & -\frac{1}{2}q \\
\frac{1}{2}q^{\ast } & \frac{1}{2} \mathrm{i} \lambda %
\end{array}%
\right) \,, \quad
V =\frac{\mathrm{i}}{2\lambda} \left(
\begin{array}{cc}
\rho & q_s \\
 q_s^{\ast}& -\rho%
\end{array}%
\right)
\end{equation*}%
Since $\rho^2 + |q_s|^2=1$ so we can assume
\begin{equation}
\rho =\cos \theta\,, \quad q_s= \sin \theta e^{-\mathrm{i} \omega}\,,
\end{equation}
and
\begin{equation}
q=(\theta _{y}-\mathrm{i}\omega _{y}\tan \theta) e^{-\mathrm{i}\omega}\,.
\end{equation}
Then the matrices $U$ and $V$  become
\begin{eqnarray}
U &=& \left(
\begin{array}{cc}
-\frac{1}{2} \mathrm{i} \lambda & -\frac 12(\theta _{y}-\mathrm{i}\omega _{y}\tan \theta) e^{-\mathrm{i}\omega}  \\
\frac 12(\theta _{y}+\mathrm{i}\omega _{y}\tan \theta) e^{\mathrm{i}\omega} & \frac{1}{2} \mathrm{i} \lambda
\end{array}%
\right) ,
\end{eqnarray}

\begin{eqnarray}
V&=&\frac{\mathrm{i}}{2\lambda} \left(
\begin{array}{cc}
\cos \theta & \sin \theta e^{-\mathrm{i}\omega } \\
\sin \theta e^{\mathrm{i}\omega } & -\cos \theta%
\end{array}%
\right)\,.
\end{eqnarray}
It has been known, for any integrable system possessing a $su(2)$ linear
representation,
we can come up with fundamental forms of surfaces~\cite{Symsoliton,RogerSchiefbook}, which read
\begin{equation}
  \text{I}= dy^2 + 2 \cos \theta dy ds + ds^2\,,
\label{Form1}
\end{equation}
\begin{equation}
 \text{II} = (\tan \theta) \omega_y dy^2 + 2 \sin \theta dy ds +(\sin \theta) \omega_s ds^2  \,.
\label{Form2}
\end{equation}
The resulting first fundamental form represents a Chebyschev net for a curve $\textbf{r}(y,s)$ embedded on the surface $\Sigma$. $\theta$ represents the angle between $\textbf{r}_y$ and $\textbf{r}_s$.
The associated Gauss equations read
 \begin{eqnarray}
 \textbf{r}_{yy} &=& (\cot \theta) \theta_y \textbf{r}_{y} -(\csc \theta) \theta_y \textbf{r}_{s} -(\tan \theta) \omega_y \textbf{N} \,,  \label{Gauss_eq1} \\
     \textbf{r}_{ys} &=& \sin \theta \textbf{N}  \,,   \label{Gauss_eq2}\\
  \textbf{r}_{ss} &=& -(\csc \theta) \theta_s \textbf{r}_{y}+(\cot \theta) \theta_s \textbf{r}_{s} +(\sin \theta) \omega_s \textbf{N}  \label{Gauss_eq3} \,,
 \end{eqnarray}
while the Weingarten equations are
 \begin{eqnarray}
 \textbf{N}_{y} &=& (\cot \theta + \csc \theta \sec \theta \omega_y) \textbf{r}_{y}-(\csc \theta \omega_y +\csc \theta ) \textbf{r}_{s}  \,, \label{Weigarten_eq1} \\
   \textbf{N}_{s} &=&  -(\csc \theta - \cot \theta \omega_s) \textbf{r}_{y}+(\cot \theta + \csc \theta \omega_s ) \textbf{r}_{s} \,. \label{Weigarten_eq2}
 \end{eqnarray}
The Mainardi-Codazzi equations give
\begin{equation}
(\omega_s \cos \theta)_y=\left( \frac{\omega_y}{\cos \theta}\right)_s\,.
\label{complexsG2}
\end{equation}
The Gaussian curvature is
\begin{equation}
K=-\frac{(\tan \theta) \omega_{y}\omega _{s}+\sin\theta}{\sin \theta}\,,
\label{Gaussian_curvature}
\end{equation}
so the Liouville-Beltrami form  of the  \textit{Theorema egregium} takes
\begin{equation}
\theta _{ys}-\sin \theta -(\tan \theta) \omega_{y}\omega _{s}=0\,.
\label{complexsG1}
\end{equation}
The system (\ref{complexsG2}) and (\ref{complexsG1}) is an alternative form of the focusing CCD system.
As mentioned in \cite{RogerSchiefbook},  it is also equivalent to the self-induced transparency (SIT) equations~\cite{SteudalPLA} and an integrable model for the stimulated Raman scattering (SRS)~\cite{KaupPhysD,SteudalPhysD}.

Let us assume the position vector on the surface
$$\textbf{r}(y,s)=(X(y,s),Y(y,s),Z(y,s)),$$
which can be represented as a matrix form
\begin{equation} \label{positionvec}
{r}(y,s)=X(y,s) \mathrm{e}_1 + Y(y,s)\mathrm{e}_2+Z(y,s) \mathrm{e}_3 \,
\end{equation}
by referring to the isomorphism between $su(2)$ and $SO(3)$ in $\mathbf{R}^3$. Moreover, we define
\begin{equation}
 \textbf{T} = \Phi^{-1} \mathrm{e}_3 \Phi \,, \quad  \textbf{N} = \Phi^{-1}\mathrm{e}_2 \Phi \,, \quad  \textbf{t} = \Phi^{-1} \mathrm{e}_1 \Phi \,,
\end{equation}
where $\mathrm{e}_1, \mathrm{e}_2, \mathrm{e}_3$ are expressed as
$\mathrm{e}_i=\frac{1}{2\mathrm{i}}\sigma_i$ for $i=1,2,3$ by using the Pauli matrices.
Since
\begin{equation}
\textbf{r}_y= \left. \Phi^{-1} U_\lambda \Phi\right|_{\lambda=1}\,,
\end{equation}
we then have
\begin{equation} \label{r_y}
\textbf{r}_y= \Phi^{-1} \mathrm{e}_3 \Phi = \textbf{T}\,,
\end{equation}
which coincides with the assumption of Darboux frame and $y$ plays a role of arc length.
From
\begin{equation}
\textbf{r}_s= \left. \Phi^{-1} V_\lambda \Phi\right|_{\lambda=1}\,,
\end{equation}
we have
\begin{equation} \label{r_u}
\textbf{r}_s= (\cos \theta) \textbf{T} + (\sin \theta \cos \omega) \textbf{N} + (\sin \theta \sin \omega) \textbf{t}\,,
\end{equation}
which can also written as
\begin{equation} \label{r_curve}
\textbf{r}_s= (\cos \theta) \textbf{r}_y +  \textbf{r}_y \times \textbf{r}_{ys}\,.
\end{equation}
This gives the curve flow for the focusing CCD system. Furthermore, based on the Darboux transform~\cite{FLZPhysD} of
\begin{eqnarray}
U &=&  \left(
\begin{array}{cc}
e^{-\frac 12 \mathrm{i}\lambda y + \frac{\mathrm{i}}{2\lambda} s} & 0 \\
0 & e^{\frac 12 \mathrm{i}\lambda y - \frac{\mathrm{i}}{2\lambda} s}
\end{array}%
\right) \,,
\end{eqnarray}
and the Sym-Tafel formula~\cite{Symsoliton,Tafel}
\begin{equation} \label{SymTafel}
r=\left.\Phi^{-1}\Phi_{\lambda}\right|_{\lambda=1}\,,
\end{equation}
we can calculate the one-soliton surface as follows
\begin{eqnarray}
  X &=& \frac{b}{(1-a)^2+b^2} \text{sech} R \cos W\,,  \\
  Y &=& \frac{b}{(1-a)^2+b^2} \text{sech} R \sin W\,,  \\
  Z &=& \frac{b}{(1-a)^2+b^2} \tanh R   +y + s\,,
\end{eqnarray}
where
\begin{equation*}
  R=b y + \frac{b}{a^2+b^2}s\,, \quad W=(1-a) y +\left(1+ \frac{a}{a^2+b^2} \right)s
\end{equation*}
The surface for a moving one-soliton with parameter $a=0.4, b=1$ is shown in Fig. 1.
\begin{figure}[htb]
\centering
\includegraphics[height=60mm,width=80mm]{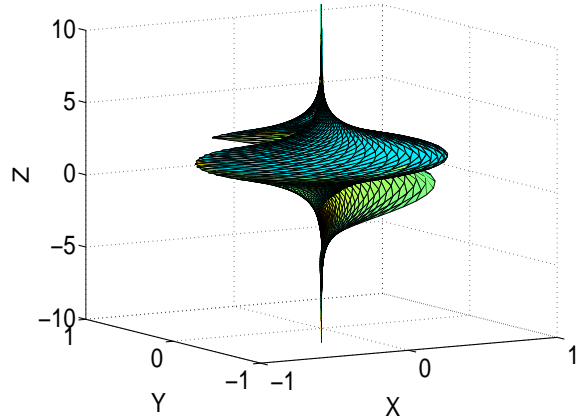}
\caption{One-soliton surface for $a=0.4, b=1$.}
\label{fig1}
\end{figure}
As mentioned in \cite{RogerSchiefbook}, the system (\ref{complexsG2}) and (\ref{complexsG1}) is directly related to
the so-called Pohlmeyer--Lund--Regge system which was originally demonstrated by Lund and Regge~\cite{LundRegge76} to represent the relativistic motion
of a string in a uniform and static external field, and by Pohlmeyer~\cite{PohlmeyerCMP76} to represent a $O(4)$ nonlinear sigma model.
It was shown by Lund~\cite{Lund77,Lund78} that the
Pohlmeyer--Lund--Regge system can also be interpreted as the
Gauss--Mainardi--Codazzi equations for particular surfaces in $S^3$ and
can be solved by the inverse scattering transformation (IST) method.
From (\ref{r_curve}), it is obvious that
\begin{equation} \label{PLG_curve}
\textbf{r}_{ys}= \textbf{r}_y \times \textbf{r}_{s}\,.
\end{equation}


We proceed to establish a link of the defocusing CCD system with the surfaces. Similarly, we could obtain
fundamental forms of surfaces~\cite{Symsoliton,RogerSchiefbook} embedded in Minkowski space $\mathbf{R}^{2,1}$
\begin{equation}
  \text{I}= dy^2 + 2 \cosh \theta dy ds + ds^2\,,
\label{Form11}
\end{equation}
\begin{equation}
 \text{II} = (\tanh \theta) \omega_y dy^2 + 2 \sinh \theta dy ds +(\sinh \theta) \omega_s ds^2  \,.
\label{Form22}
\end{equation}
Here $\theta$ represents the angle between $\textbf{r}_y$ and $\textbf{r}_s$ in Minkowski space~\cite{Lopez}.
Therefore, the associated Gauss equations take the form
 \begin{eqnarray}
 \textbf{r}_{yy} &=& (\coth \theta) \theta_y \textbf{r}_{y} -(\sinh \theta)^{-1} \theta_y \textbf{r}_{s} -(\tanh \theta) \omega_y \textbf{N} \,,  \label{Gauss_eq11} \\
     \textbf{r}_{ys} &=& \sinh \theta \textbf{N}  \,,   \label{Gauss_eq22}\\
  \textbf{r}_{ss} &=& -(\sinh \theta)^{-1} \theta_s \textbf{r}_{y}+(\coth \theta) \theta_s \textbf{r}_{s} +(\sinh \theta) \omega_s \textbf{N}  \label{Gauss_eq33} \,,
 \end{eqnarray}
while the Weingarten equations read
 \begin{eqnarray}
 \textbf{N}_{y} &=& (\coth \theta + (\sinh \theta \cosh \theta)^{-1} \omega_y) \textbf{r}_{y}-((\sinh \theta)^{-1} \omega_y +(\sinh \theta)^{-1} ) \textbf{r}_{s}  \,, \label{Weigarten_eq11} \\
   \textbf{N}_{s} &=&  -((\sinh \theta)^{-1} - \coth \theta \omega_s) \textbf{r}_{y}+(\coth \theta + (\sinh \theta)^{-1} \omega_s ) \textbf{r}_{s} \,. \label{Weigarten_eq22}
 \end{eqnarray}
Both of the Mainardi-Codazzi equations lead to the same equation
\begin{equation}
(\omega_s \cosh \theta)_y=\left( \frac{\omega_y}{\cosh \theta}\right)_s\,.
\label{complexshG2}
\end{equation}
Since the Gaussian curvature is
\begin{equation}
K=-\frac{(\tanh \theta) \omega_{y}\omega _{s}+\sinh\theta}{\sinh \theta}\,,
\label{Gaussian_curvature2}
\end{equation}
then the Liouville-Beltrami form  of the  \textit{Theorema egregium} becomes
\begin{equation}
\theta _{ys}-\sinh \theta -(\tanh \theta) \omega_{y}\omega _{s}=0\,.
\label{complexshG1}
\end{equation}
If we assume the following parameterizations
\begin{equation}
\rho = \cosh \theta\,, \quad q_s=\sinh \theta e^{-\mathrm{i} \omega}\,,
\end{equation}
and
\begin{equation}
q=(\theta _{y}-\mathrm{i}\omega _{y}\tanh \theta) e^{-\mathrm{i}\omega}\,,
\end{equation}
Then the system (\ref{complexshG2}) and (\ref{complexshG1}) becomes the defocusing CCD system.
On the other hand, if we assume
\begin{equation}
\omega _{s}=\chi _{s}\frac{\cosh \theta }{2\cosh ^{2}(\theta /2)}; \quad
\omega_{y}=\chi _{y}\frac{1}{2\cosh ^{2}(\theta /2)}\,,
\end{equation}
then the system (\ref{complexshG2}) and (\ref{complexshG1}) leads to the Pohlmeyer--Lund--Regge system of hyperbolic type~\cite{NakayamaJPSJ98}
\begin{equation}
\theta _{ys}-\sinh \theta -\frac{1}{2}\frac{\sinh \left( \theta /2\right) }{%
\cosh ^{3}\left( \theta /2\right) }\chi _{y}\chi _{s}=0\,,
\end{equation}

\begin{equation}
\chi _{ys}+\frac{1}{\sinh \theta }(\theta _{y}\chi _{s}+\theta _{s}\chi
_{y})=0\,.
\end{equation}
The Darboux frame can be cast into
\begin{equation}
 \textbf{T} = \Psi^{-1} \mathrm{\tilde{e}}_1 \Psi \,, \quad  \textbf{t} = \Psi^{-1}\mathrm{\tilde{e}}_2 \Psi \,, \quad  \textbf{N} = \Psi^{-1} \mathrm{\tilde{e}}_3 \Psi \,,
\end{equation}
then from
\begin{equation}
\textbf{r}_y= \left. \Phi^{-1} U_\lambda \Phi\right|_{\lambda=1}\,, \quad \textbf{r}_s= \left. \Phi^{-1} V_\lambda \Phi\right|_{\lambda=1}\,,
\end{equation}
we have
\begin{equation} \label{r_ys}
\textbf{r}_y= \textbf{T}\,, \ \
\textbf{r}_s= (\cosh \theta) \textbf{T} + (\sinh \theta \cos \omega) \textbf{N} + (\sinh \theta \sin \omega) \textbf{t}\,,
\end{equation}
the former coincides with the assumption of Darboux frame where $y$ serves as the arc length, the latter
gives the curve flow for the defocusing CCD system, which can also cast into
\begin{equation}
\textbf{r}_s= (\cosh \theta) \textbf{r}_y + \textbf{r}_y \times \textbf{r}_{ys} \,.
\end{equation}
\section{Reduction from the extended KP hierarchy}
\subsection{Bilinearizations of the defocusing CCD and CSP equations}
The bilinearizations of the defocusing CCD system and CSP equation are established
by the following propositions.
\begin{prop}
By means of the dependent variable transformations
\begin{equation} \label{var_tran1}
q= \frac{\beta}{2}\frac{g}{f} e^{\mathrm{i}(y+\gamma s/2)}\,,
\end{equation}
\begin{equation}  \label{var_tran2}
\rho= -\frac{\gamma}{2} -2(\log f)_{ys}\,,
\end{equation}
the defocusing CCD system (\ref{CCDa1})--(\ref{CCDb1}) are transformed into the following bilinear equations
for the tau functions $f$ and $g$
\begin{equation}
(D_{y}D_{s}+\mathrm{i} D_{s}+\frac{\gamma}{2} \mathrm{i} D_y)g\cdot f=0\,,  \label{CDSP_bilinear1}
\end{equation}
\begin{equation}
\left(D_{s}^{2}-\frac{\beta^2}{8}\right)f\cdot f=-\frac{\beta^2}{8}gg^{\ast}\,,  \label{CDSP_bilinear2}
\end{equation}
where $D$ is the Hirota $D$-operator defined by \cite{Hirotabook}
\begin{equation*}
D_s^n D_y^m f\cdot g=\left(\frac{\partial}{\partial s} -\frac{\partial}{%
\partial s^{\prime }}\right)^n \left(\frac{\partial}{\partial y} -\frac{%
\partial}{\partial y^{\prime }}\right)^m f(y,s)g(y^{\prime },s^{\prime
})|_{y=y^{\prime }, s=s^{\prime }}\,.
\end{equation*}
\end{prop}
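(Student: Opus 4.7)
The plan is to verify the proposition by direct substitution, reducing each equation of the defocusing CCD system to its claimed Hirota bilinear form through standard $D$-operator identities. The two equations (\ref{CCDa1}) and (\ref{CCDb1}) are handled independently.

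For the evolution equation (\ref{CCDa1}), I would differentiate $q=(\beta/2)(g/f)e^{\mathrm{i}(y+\gamma s/2)}$ in $y$ and $s$, separating the phase factor from the rational part $g/f$. The $y$- and $s$-derivatives of the phase contribute constants $\mathrm{i}$ and $\mathrm{i}\gamma/2$, which after bilinearization become the $\mathrm{i}D_s$ and $(\mathrm{i}\gamma/2)D_y$ terms in (\ref{CDSP_bilinear1}). Multiplying through by $f^2$ and invoking the Hirota identities
\[
f^2(g/f)_y=D_y g\cdot f,\quad f^2(g/f)_s=D_s g\cdot f,\quad f^2(g/f)_{ys}=D_y D_s g\cdot f-(g/f)\,D_y D_s f\cdot f,
\]
together with $D_y D_s f\cdot f=2f^2(\log f)_{ys}$, the decisive cancellation occurs between the $-(g/f)D_y D_s f\cdot f$ piece produced by $f^2(g/f)_{ys}$ and the $-2 f^2(\log f)_{ys}(g/f)$ piece produced by $\rho q$ via (\ref{var_tran2}). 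What remains is exactly (\ref{CDSP_bilinear1}).

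For the conservation law (\ref{CCDb1}), I would rewrite $|q|^2=(\beta^2/4)\,gg^{*}/f^{2}$ (taking $f$ real, as is standard in dark-soliton tau-function constructions) and use the Hirota identity $D_s^2 f\cdot f=2f^2(\log f)_{ss}$ to recognise that (\ref{CDSP_bilinear2}) is equivalent to the pointwise relation $|q|^2=\beta^2/4-4(\log f)_{ss}$. Differentiating this in $s$ and comparing with $\rho_y=-2(\log f)_{yys}$ obtained from (\ref{var_tran2}), the conservation law reduces to matching a third-order derivative identity for $\log f$ that is compatible with the bilinear system.

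The main obstacle is the second step: whereas (\ref{CCDa1}) bilinearizes in a single shot in which all non-bilinear contributions cancel by design, the conservation form (\ref{CCDb1}) is not reduced to (\ref{CDSP_bilinear2}) by one routine Hirota substitution and requires combining the two bilinear equations together while carefully tracking mixed derivatives of $\log f$. The cleanest route is to treat (\ref{CDSP_bilinear2}) as furnishing $|q|^2$ in closed form in terms of $\log f$, and then to verify consistency with the $\rho$-definition (\ref{var_tran2}) by appropriate Hirota manipulations, invoking the auxiliary identities that hold for tau functions arising from the extended KP-hierarchy reduction developed in Section~3.
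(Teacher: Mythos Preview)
Your treatment of equation (\ref{CCDa1}) is correct and coincides with the paper's argument: substitute the transformations, apply the identity
\[
\frac{D_yD_s\, g\cdot f}{f^2}=\Bigl(\frac{g}{f}\Bigr)_{ys}+2(\log f)_{ys}\,\frac{g}{f},
\]
and observe that the $2(\log f)_{ys}\,(g/f)$ contribution cancels against the piece of $\rho q$ coming from (\ref{var_tran2}), leaving exactly (\ref{CDSP_bilinear1}).

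For (\ref{CCDb1}), however, you miss the key simplification and end up with a gap. The paper neither combines the two bilinear equations nor invokes any tau-function identity from the KP reduction of Section~3. It simply substitutes (\ref{var_tran1})--(\ref{var_tran2}) into (\ref{CCDb1}), obtains an expression that is a total derivative,
\[
-2(\log f)_{yss}-\frac{\beta^2}{8}\Bigl(\frac{gg^*}{f^2}\Bigr)_{y}=0,
\]
integrates once, and fixes the integration constant to $\beta^2/8$; this immediately yields (\ref{CDSP_bilinear2}). Your proposed route---differentiate (\ref{CDSP_bilinear2}) in $s$ and then try to match $(\log f)_{sss}$ against $(\log f)_{yys}$ via ``auxiliary identities that hold for tau functions arising from the extended KP-hierarchy reduction''---is both unnecessary and circular: Proposition~1 is a self-contained bilinearization statement and must be established without appealing to the specific tau-function construction that only appears later in the paper. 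The ``main obstacle'' you describe is an artifact of differentiating when you should be integrating.
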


\begin{proof}
The substitution of dependent variable transformation (\ref{var_tran2}) into Eq.(\ref{CCDb1}) yields
 \begin{equation*}
 -2(\log f)_{yss} - \frac{\beta^2}{8}  \left( \frac {gg^{\ast}}{f^2} \right)_y=0\,.
\end{equation*}
Integrating once in $y$ and setting the integration constant to be $\beta^2/8$, we then have
\begin{equation*}
 2(\log f)_{ss} - \frac{\beta^2}{8}  = -\frac{\beta^2}{8}  \frac {gg^{\ast}}{f^2}\,,
\end{equation*}
which is exactly the bilinear equation (\ref{CDSP_bilinear2}).
On the other hand, Eq.(\ref{CCDa1}) is converted into
\begin{equation}
  \left(\frac{g}{f}\right)_{y s} + \mathrm{i}  \left(\frac{g}{f}\right)_{s} + \mathrm{i}  \frac{\gamma}{2} \left(\frac{g}{f}\right)_{y} - \frac{\gamma}{2}  \frac{g}{f} = \left( -\frac{\gamma}{2}-2(\log f)_{ys}\right) \frac{g}{f} \,,
\end{equation}
via the dependent variable transformation (\ref{var_tran1}), or, is simplified into
\begin{equation}
  \left(\frac{g}{f}\right)_{y s} + 2 (\log f)_{ys} \frac{g}{f}   + \mathrm{i}  \left(\frac{g}{f}\right)_{s} + \mathrm{i}  \frac{\gamma}{2}  \left(\frac{g}{f}\right)_{y}  = 0 \,.
\end{equation}
Referring to a bilinear identity
\begin{equation*}
 \frac{D_y D_s g \cdot f}{f^2} =\left(\frac{g}{f}\right)_{y s} + 2 (\log f)_{y s} \frac{g}{f}  \,,
\end{equation*}
we then have the bilinear equation (\ref{CDSP_bilinear1}).
\end{proof}

\begin{prop}
By means of the dependent variable transformation
\begin{equation}
q=  \frac{\beta}{2} \frac{g}{f} e^{\mathrm{i}( y- s)}\,,
\end{equation}
and the hodograph (reciprocal) transformation
\begin{equation}
x =  -\frac{\gamma}{2} y + \frac{\beta^2}{8}  s -2(\log f)_s\,, \quad t=-s\,,
\end{equation}
the defocusing CSP equation (\ref{dCSP}) shares the same bilinear equations (\ref{CDSP_bilinear1})--(\ref{CDSP_bilinear2}).
\end{prop}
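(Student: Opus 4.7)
The plan is to use Proposition 1 together with the reciprocal correspondence established in Section 2. Since Proposition 1 already shows that the bilinear system (\ref{CDSP_bilinear1})--(\ref{CDSP_bilinear2}) is equivalent to the defocusing CCD system (\ref{CCDa1})--(\ref{CCDb1}) in the variables $(y,s)$, and Section 2 showed that the defocusing CCD system is linked to the defocusing CSP equation by a hodograph (reciprocal) transformation, it suffices to verify that the change of variables $x = -\frac{\gamma}{2}y + \frac{\beta^2}{8}s - 2(\log f)_s$, $t=-s$ implements precisely that hodograph.

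First I would differentiate the formula for $x$ with respect to $y$ and use the dependent-variable transformation (\ref{var_tran2}) from Proposition 1:
\begin{equation*}
x_y = -\frac{\gamma}{2} - 2(\log f)_{ys} = \rho.
\end{equation*}
This matches the reciprocal relation $x_y = \rho$, with the roles of $y$ and $s$ interchanged from the first subsection of Section 2, as was already done when passing to the defocusing case.

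Next, I would compute $x_s = \frac{\beta^2}{8} - 2(\log f)_{ss}$. Dividing (\ref{CDSP_bilinear2}) by $f^2$ and using the identity $D_s^2 f\cdot f/f^2 = 2(\log f)_{ss}$ gives $2(\log f)_{ss} = \frac{\beta^2}{8}\bigl(1 - gg^{\ast}/f^2\bigr)$, whence
\begin{equation*}
x_s = \frac{\beta^2}{8}\frac{gg^{\ast}}{f^2} = \frac{1}{2}|q|^2,
\end{equation*}
since $|q|^2 = (\beta/2)^2 gg^{\ast}/f^2$ by (\ref{var_tran1}). Together with $t=-s$, these identities reproduce exactly the reciprocal transformation of Section 2 with $\sigma = -1$. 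The closedness of $dx = \rho\,dy + \frac{1}{2}|q|^2\,ds$ amounts to the conservation law (\ref{CCDb1}), which is precisely the equation that was integrated once in $y$ to obtain (\ref{CDSP_bilinear2}) in the proof of Proposition 1, so consistency is automatic.

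Finally, with $\partial_x = \rho^{-1}\partial_y$ and $\partial_t = -\partial_s + \frac{1}{2\rho}|q|^2\,\partial_y$ obtained from the chain rule, the defocusing CSP equation (\ref{dCSP}) in the variables $(y,s)$ is equivalent to the CCD system (\ref{CCDa1})--(\ref{CCDb1}); this is the calculation already carried out in Section 2 read in reverse. By Proposition 1 the CCD system holds whenever $(f,g)$ satisfy (\ref{CDSP_bilinear1})--(\ref{CDSP_bilinear2}), and the proof is complete. The only nuisance is purely bookkeeping — keeping straight the swap $y\leftrightarrow s$ between the two subsections of Section 2 and matching signs for $\sigma=-1$ — but beyond that no new computation is required.
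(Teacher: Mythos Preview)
Your proposal is correct and follows essentially the same route as the paper's own proof: you compute $x_y=\rho$ from (\ref{var_tran2}) and $x_s=\tfrac12|q|^2$ from (\ref{CDSP_bilinear2}), then invoke the hodograph link between the defocusing CCD system and the defocusing CSP equation already established in Section~2, together with Proposition~1. Your write-up is in fact a bit more explicit than the paper's (you spell out the use of $D_s^2 f\cdot f/f^2=2(\log f)_{ss}$ and note the closedness of $dx$), but no genuinely different idea is involved.
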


\begin{proof}
From the hodograph (reciprocal) transformation and bilinear equations, we could have
$$
\frac{\partial x}{\partial y} =  -\frac{\gamma}{2} -2(\log f)_{y s} = \rho\,,
$$
and
$$
\frac{\partial x}{\partial s} =  \frac{\beta^2}{8}  -2(\log f)_{ss} = \frac{\beta^2}{8}  \frac{|g|^2}{f^2} = \frac 12 |q|^2\,,
$$
which implies
\begin{equation*}
\partial _{y}=\rho \partial _{x},\quad \partial _{s}=-\partial_{t}+%
\frac{1}{2}|q|^{2}\partial _{x}\,.
\end{equation*}%
Thus, the defocusing CSP equation is derived from the defocusing CCD system based on the discussion in previous section.
\end{proof}
\subsection{Bilinear equations for the extended KP hierarchy}
Let us start with a concrete form of the Gram determinant expression of the
tau functions for the extended KP hierarchy with negative flows
\begin{equation}
\tau _{nkl}=\left\vert m_{ij}^{nkl}\right\vert _{1\leq i,j\leq N},
\end{equation}%
where
\begin{equation*}
m_{ij}^{nkl}=\delta _{ij}+\frac{1}{p_{i}+\bar{p}_{j}}\varphi _{i}^{nkl}\psi_{j}^{nkl},
\end{equation*}%
\begin{equation*}
\varphi _{i}^{nkl}=p_{i}^{n}(p_{i}-a)^{k}(p_{i}-b)^{l}e^{\xi _{i}},\quad
\psi
_{j}^{nkl}=\left(-\frac{1}{\bar{p}_{j}}\right)^{n}\left(-\frac{1}{\bar{p}_{j}+a}\right)^{k}
\left(-\frac{1}{\bar{p}_{j}+b}\right)^{l}e^{\bar{\xi}_{j}},
\end{equation*}
with
\begin{equation*}
\xi _{i}=\frac{1}{p_{i}}x_{-1}+p_{i}x_{1}+\frac{1}{p_{i}-a}t_{a}+\frac{1}{p_{i}-b}t_{b}+\xi _{i0},
\end{equation*}%
\begin{equation*}
\bar{\xi}_{j}=\frac{1}{\bar{p}_{j}}x_{-1}+\bar{p}_{j}x_{1}+\frac{1}{\bar{p}%
_{j}+a}t_{a}+\frac{1}{\bar{p}_{j}+b}t_{b}+\bar{\xi}_{j0}.
\end{equation*}%
Here $p_{i}$, $\bar{p}_{j}$, $\xi _{i0}$, $\bar{\xi}_{j0}$, $a$, $b$ are
constants. Based on the KP tau function theory~\cite{Sato,JM}, the above
tau functions satisfy a set of bilinear equations
\begin{equation}
\left(\frac{1}{2}D_{x_{1}}D_{x_{-1}}-1\right)\tau _{nkl}\cdot \tau _{nkl}=-\tau
_{n+1,kl}\tau _{n-1,kl}\,,
\label{KPbilinear1}
\end{equation}%
\begin{equation}
(aD_{t_{a}}-1)\tau _{n+1,kl}\cdot \tau _{nkl}=-\tau _{n+1,k-1,l}\tau _{n,k+1,l}\,,
\label{KPbilinear2}
\end{equation}%
\begin{equation}
\left(D_{x_{1}}(aD_{t_{a}}-1)-2a\right)\tau _{n+1,kl}\cdot \tau _{nkl}=(D_{x_{1}}-2a)\tau
_{n+1,k-1,l}\cdot \tau _{n,k+1,l}\,,
\label{KPbilinear3}
\end{equation}
\begin{equation}
\left(D_{x_{1}}(bD_{t_{b}}-1)-2b\right)\tau _{n+1,kl}\cdot \tau _{nkl}=(D_{x_{1}}-2b)\tau
_{n+1,k,l-1}\cdot \tau _{nk,l+1}\,.
\label{KPbilinear4}
\end{equation}
The proof is given below by referring to the Grammian technique~\cite{Hirotabook,MiyakeOhtaGram}.
It is easily shown that $m_{ij}^{nkl}$, $%
\varphi _{i}^{nkl}$, $\psi _{j}^{nkl}$ satisfy
\begin{equation*}
\partial _{x_{1}}m_{ij}^{nkl}=\varphi _{i}^{nkl}\psi _{j}^{nkl},\quad
\partial _{x_{-1}}m_{ij}^{nkl}=-\varphi _{i}^{n-1,kl}\psi _{j}^{n+1,kl},
\end{equation*}%
\begin{equation*}
\partial _{t_{a}}m_{ij}^{nkl}=-\varphi _{i}^{n,k-1,l}\psi _{j}^{n,k+1,l},
\end{equation*}%
\begin{equation*}
m_{ij}^{n+1,kl}=m_{ij}^{nkl}+\varphi _{i}^{nkl}\psi _{j}^{n+1,kl},\quad
m_{ij}^{n,k+1,l}=m_{ij}^{nkl}+\varphi _{i}^{nkl}\psi _{j}^{n,k+1,l}.
\end{equation*}%
Therefore the following differential and difference formulae hold for $%
\tau_{nkl}$,
\begin{equation*}
\partial_{x_{1}}\tau _{nkl}= \left\vert
\begin{matrix}
m_{ij}^{nkl} & \varphi_i^{nkl} \\
-\psi_j^{nkl} & 0%
\end{matrix}
\right\vert,\quad
\partial _{x_{-1}}\tau _{nkl}= \left\vert
\begin{matrix}
m_{ij}^{nkl} & \varphi_i^{n-1,kl} \\
\psi_j^{n+1,kl} & 0%
\end{matrix}
\right\vert,
\end{equation*}%
\begin{equation*}
a\partial _{t_{a}}\tau _{nkl}= \left\vert
\begin{matrix}
m_{ij}^{nkl} & a\varphi_i^{n,k-1,l} \\
\psi_j^{n,k+1,l} & 0%
\end{matrix}
\right\vert,\quad
\tau _{n+1,kl}= \left\vert
\begin{matrix}
m_{ij}^{nkl} & \varphi_i^{nkl} \\
-\psi_j^{n+1,kl} & 1%
\end{matrix}
\right\vert,
\end{equation*}%
\begin{equation*}
\tau _{n-1,kl}= \left\vert
\begin{matrix}
m_{ij}^{nkl} & \varphi_i^{n-1,kl} \\
\psi_j^{nkl} & 1%
\end{matrix}
\right\vert,\quad
\tau _{n,k+1,l}= \left\vert
\begin{matrix}
m_{ij}^{nkl} & \varphi_i^{nkl} \\
-\psi_j^{n,k+1,l} & 1%
\end{matrix}
\right\vert,
\end{equation*}%
\begin{equation*}
\tau _{n+1,k-1,l}=\left\vert
\begin{matrix}
m_{ij}^{nkl} & a\varphi_i^{n,k-1,l} \\
\psi_j^{n+1,kl} & 1%
\end{matrix}
\right\vert,\quad
\partial _{x_{1}}\tau _{n+1,kl}= \left\vert
\begin{matrix}
m_{ij}^{nkl} & \varphi_i^{n+1,kl} \\
-\psi_j^{n+1,kl} & 0%
\end{matrix}
\right\vert,
\end{equation*}%
\begin{equation*}
(\partial _{x_{1}}+a)\tau _{n,k+1,l}= \left\vert
\begin{matrix}
m_{ij}^{nkl} & \varphi_i^{n+1,kl} \\
-\psi_j^{n,k+1,l} & a%
\end{matrix}
\right\vert,
\end{equation*}%
\begin{equation}\label{border1}
(\partial _{x_{1}}\partial _{x_{-1}}-1)\tau _{nkl}= \left\vert
\begin{matrix}
m_{ij}^{nkl} & \varphi_i^{n-1,kl} & \varphi_i^{nkl} \\
\psi_j^{n+1,kl} & 0 & -1 \\
-\psi_j^{nkl} & -1 & 0%
\end{matrix}
\right\vert,
\end{equation}
\begin{equation}\label{border2}
(a\partial _{t_{a}}-1)\tau _{n+1,kl}= \left\vert
\begin{matrix}
m_{ij}^{nkl} & \varphi_i^{nkl} & a\varphi_i^{n,k-1,l} \\
-\psi_j^{n+1,kl} & 1 & -1 \\
\psi_j^{n,k+1,l} & -1 & 0%
\end{matrix}
\right\vert,
\end{equation}
\begin{equation}\label{border3}
(\partial _{x_{1}}(a\partial _{t_{a}}-1)-a)\tau _{n+1,kl} =\left\vert
\begin{matrix}
m_{ij}^{nkl} & \varphi_i^{n+1,kl} & a\varphi_i^{n,k-1,l} \\
-\psi_j^{n+1,kl} & 0 & -1 \\
\psi_j^{n,k+1,l} & -a & 0%
\end{matrix}
\right\vert.
\end{equation}
Applying the Jacobi identity of determinants to the bordered determinants
(\ref{border1})--(\ref{border3}), the three bilinear
equations (\ref{KPbilinear1})--(\ref{KPbilinear3}) are satisfied.
The bilinear equation (\ref{KPbilinear4}) can be proved exactly in the same way as equation (\ref{KPbilinear3}) .
\subsection{Reduction to the CCD system and the CSP equation of defocusing type}
In what follows, we briefly show the reduction processes of reducing bilinear
equations of extended KP hierarchy (\ref{KPbilinear1})--(\ref{KPbilinear4}) to the bilinear equation (\ref{CDSP_bilinear1})--(\ref{CDSP_bilinear2}).
Firstly, we start with dimension reduction by noting that the determinant expression of $\tau _{nkl}$,
\begin{equation*}
\tau _{nkl}=\left\vert \delta _{ij}+\frac{1}{p_{i}+\bar{p}_{j}}\varphi
_{i}^{nkl}\psi _{j}^{nkl}\right\vert _{1\leq i,j\leq N}\,,
\end{equation*}
can be alternatively expressed by%
\begin{equation*}
\tau _{nkl}=\left\vert \delta _{ij}+\frac{1}{p_{i}+\bar{p}_{j}}\varphi
_{i}^{nkl}\psi _{i}^{nkl}\right\vert _{1\leq i,j\leq N}\,,
\end{equation*}
by dividing $j$-th column by $\psi _{j}^{nkl}$ and multiplying $i$-th row by $%
\psi _{i}^{nkl}$ for $1\leq i,j\leq N$. By taking
\begin{equation}
\bar{p}_{j}=\frac{1}{p_{j}},\quad b=-\frac{1}{a},
\label{reduction_par}
\end{equation}%
we can easily check that
$\tau _{nkl}$ satisfies the reduction conditions
\begin{equation}
\partial _{x_{1}}\tau _{nkl}=\partial _{x_{-1}}\tau _{nkl}\,,
\label{reduction_condition1}
\end{equation}%
\begin{equation}
-a^2\partial _{t_{a}}\tau _{nkl}=\partial _{t_{b}}\tau _{nkl}\,,
\label{reduction_condition2}
\end{equation}%
\begin{equation}
\tau _{n-1,k+1,l+1}=\tau _{nkl}\,.
\label{reduction_condition3}
\end{equation}%
Therefore the bilinear equation (\ref{KPbilinear1}) is reduced to
\begin{equation}
\left(\frac{1}{2}D_{x_{1}}^{2}-1\right)\tau _{nkl}\cdot \tau _{nkl}=-\tau _{n+1,kl}\tau
_{n-1,kl}.
\label{Reduction_bilinear1}
\end{equation}
Moreover, by referring to the bilinear equation (\ref{KPbilinear4}) and the reduction conditions (\ref{reduction_condition2})--(\ref{reduction_condition3}), we have
\begin{equation*}
\left(D_{x_{1}}(aD_{t_{a}}-1)+\frac{2}{a}\right)
\tau _{n+1,kl}\cdot \tau _{nkl}
=\left(D_{x_{1}}+\frac{2}{a}\right)\tau _{n,k+1,l}\cdot \tau _{n+1,k-1,l}\,,
\end{equation*}%
thus using (\ref{KPbilinear2}) and (\ref{KPbilinear3}) we get
\begin{equation*}
\left(D_{x_{1}}(aD_{t_{a}}-1)-a+\frac{1}{a}\right)
\tau _{n+1,kl}\cdot \tau _{nkl}
=\left(-a+\frac{1}{a}\right)\tau _{n+1,k-1,l}\tau _{n,k+1,l}
\end{equation*}%
\begin{equation}
=\left(a-\frac{1}{a}\right)(aD_{t_{a}}-1)\tau _{n+1,kl}\cdot \tau _{nkl}\,,
\label{Reduction_bilinear2}
\end{equation}%
i.e.,
\begin{equation}
(D_{x_{1}}(aD_{t_{a}}-1)-(a^2-1)D_{t_{a}})\tau _{n+1,kl}\cdot
\tau _{nkl}=0\,.
\label{Reduction_bilinear3}
\end{equation}
Next, we proceed to the reduction of complex conjugate, which turns out to be very simple. Specifically, by taking $a$ pure imaginary,
$|p_{i}|=1$ and $\bar{\xi}_{j0}=\xi _{j0}^{\ast}$,
where $^{\ast}$ means complex conjugate, we have $\bar{p}_{i}=p^{\ast}_{i}$ and
\begin{equation*}
\tau _{n00}^{\ast }=\tau _{-n,00}\,.
\end{equation*}
Due to the relation (\ref{reduction_condition1}) and (\ref{reduction_condition2}), we can choose 
$x_{1}$ (or $x_{1}+x_{-1}$) and $t_{a}$ (or $t_{a}+t_{b}$ ) as two independent variables. Therefore, we have
\begin{equation*}
\tau _{n00}=\left\vert \delta _{ij}+\frac{1}{p_{i}+p_{j}^{\ast }}\left(-\frac{p_{i}%
}{p_{j}^{\ast }}\right)^{n}e^{\xi _{i}+\xi _{j}^{\ast }}\right\vert _{1\leq
i,j\leq N}\,,
\end{equation*}
with
\begin{equation*}
\xi _{i}=p_{i}x_{1}+\frac{1}{p_{i}-a}t_{a}+\xi _{i0}\,.
\end{equation*}
In summary, by defining
\begin{equation*}
f=\tau _{000},\quad g=\tau _{100},
\end{equation*}%
we arrive at
\begin{equation}
\left(D_{x_{1}}D_{t_{a}}-\frac{1}{a}D_{x_{1}}
-\left(a-\frac{1}{a}\right)D_{t_{a}}\right)g\cdot f=0\,,
\label{Reduction_bilinear4}
\end{equation}
\begin{equation}
\left(\frac{1}{2}D_{x_{1}}^{2}-1\right)f\cdot f=-gg^{\ast}\,.
\label{Reduction_bilinear5}
\end{equation}
Finally, by setting $a=\mathrm{i} c$, $t_{a}= cy$, $x_{1}= \beta s/4$ and $\beta (c^2+1)=-2 \gamma c$, the above bilinear equations
coincide with the bilinear equations (\ref{CDSP_bilinear1})--(\ref{CDSP_bilinear2}). Therefore, the reduction process is complete. As a result, we can provide the determinant solution to the defocusing CSP equation by the following theorem.
\begin{theorem}
The defocusing CSP equation (\ref{dCSP}) admits the following determinant solution
\begin{equation}
q= \frac{\beta}{2} \frac{g}{f}e^{\mathrm{i}( y+\gamma s/2)},\quad x= -\frac{\gamma}{2}  y+
\frac{\beta^2}{8}s-2(\log f)_{s}\,\quad t=-s,
\label{ndark-qhodo}
\end{equation}
where%
\begin{eqnarray}
f &=&\left\vert \delta _{ij}+\frac{1}{p_{i}+p_{j}^{\ast }}e^{\xi _{i}+\xi
_{j}^{\ast }}\right\vert _{1\leq i,j\leq N}\,, \nonumber \\
g &=&\left\vert \delta _{ij}+\left( -\frac{p_{i}}{p_{j}^{\ast }}\right)
\frac{1}{p_{i}+p_{j}^{\ast }}e^{\xi _{i}+\xi _{j}^{\ast }}\right\vert
_{1\leq i,j\leq N}\,,
\label{ndark-tau}
\end{eqnarray}
with
\begin{equation}
|p_{i}|=1,\quad \xi_{i}=\frac{c }{p_{i}-\mathrm{i}c}y+\frac{\beta}{4} p_{i} s+\xi _{i0}\,,
\label{ndark-par}
\end{equation}
under a constraint
\begin{equation}
\frac{\beta}{\gamma} = -\frac{c^2+1}{2 c}\,.
\label{ndark-constraint}
\end{equation}
\end{theorem}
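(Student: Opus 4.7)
The plan is to assemble the theorem by chaining together the three-step reduction already developed in Section~3: start from the Gram determinant tau functions of the extended KP hierarchy, impose dimensional and complex-conjugate reductions to land on the CCD bilinear equations (\ref{CDSP_bilinear1})--(\ref{CDSP_bilinear2}), and then invoke Proposition~2 to convert the CCD solution into a CSP solution via the hodograph transformation. The output $(q,x,t)$ in (\ref{ndark-qhodo}) is exactly what Proposition~2 produces once the bilinear pair is in hand, so the work is to verify that the specific determinants (\ref{ndark-tau}) with the parameters (\ref{ndark-par}) do satisfy (\ref{CDSP_bilinear1})--(\ref{CDSP_bilinear2}).

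Concretely, I would first take $f=\tau_{000}$ and $g=\tau_{100}$ from the general tau functions $\tau_{nkl}$ introduced at the top of Section~3.2. The Grammian/Jacobi-identity arguments already worked out in the paper show that these tau functions satisfy (\ref{KPbilinear1})--(\ref{KPbilinear4}). I would then apply the parameter restriction $\bar p_j = 1/p_j$, $b=-1/a$ from (\ref{reduction_par}) to force the three reduction conditions (\ref{reduction_condition1})--(\ref{reduction_condition3}); these collapse (\ref{KPbilinear1}) to (\ref{Reduction_bilinear1}) and, upon combining (\ref{KPbilinear2})--(\ref{KPbilinear4}), to (\ref{Reduction_bilinear3}). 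Next, the complex-conjugate reduction $|p_i|=1$, $\bar\xi_{j0}=\xi_{j0}^{\ast}$ with $a$ pure imaginary gives $\bar p_i = p_i^{\ast}$ and $\tau_{-n,0,0}=\tau_{n00}^{\ast}$, so that $f$ and $g$ take exactly the determinant form displayed in (\ref{ndark-tau}) and the pair satisfies (\ref{Reduction_bilinear4})--(\ref{Reduction_bilinear5}).

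Finally I would perform the affine change of independent variables $a=\mathrm{i}c$, $t_a = cy$, $x_1 = \beta s/4$, subject to $\beta(c^{2}+1) = -2\gamma c$. Under this substitution one has $D_{x_1}=(4/\beta)D_s$, $D_{t_a}=(1/c)D_y$, and $1/a = -\mathrm{i}/c$, $a-1/a = \mathrm{i}(c^2+1)/c = -2\mathrm{i}\gamma/\beta$. A direct (essentially routine) rescaling of (\ref{Reduction_bilinear4}) and (\ref{Reduction_bilinear5}) by $\beta c/4$ and $\beta^2/8$ respectively then produces exactly (\ref{CDSP_bilinear1}) and (\ref{CDSP_bilinear2}); simultaneously, $\xi_i = p_i x_1 + (p_i-a)^{-1}t_a + \xi_{i0}$ becomes the phase $\xi_i = (\beta/4)p_i s + c(p_i-\mathrm{i}c)^{-1}y + \xi_{i0}$ in (\ref{ndark-par}), and the constraint $\beta(c^2+1)=-2\gamma c$ is precisely (\ref{ndark-constraint}). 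With (\ref{CDSP_bilinear1})--(\ref{CDSP_bilinear2}) verified, Proposition~2 gives the CSP solution $(q,x,t)$ in the form (\ref{ndark-qhodo}).

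No genuinely hard step arises: the Grammian identities have already been written down, and the only risk of error is bookkeeping, chiefly tracking the rescaling factors in the Hirota operators, the signs produced by $a=\mathrm{i}c$, and the consistent application of the constraint $\beta(c^2+1)=-2\gamma c$ when moving between $x_1,t_a$ and $s,y$. Everything else, including the form of $f$ and $g$, the phase $\xi_i$, and the hodograph map, is forced by the substitution.
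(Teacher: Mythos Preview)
Your proposal is correct and follows exactly the approach of the paper: the theorem is stated there as the direct outcome of the reduction chain in Section~3.2--3.3 (KP bilinears $\to$ dimensional reduction (\ref{reduction_par}) $\to$ complex-conjugate reduction $\to$ variable change $a=\mathrm{i}c$, $t_a=cy$, $x_1=\beta s/4$), after which Propositions~1 and~2 convert the bilinear pair (\ref{CDSP_bilinear1})--(\ref{CDSP_bilinear2}) into the CSP solution via the hodograph transformation. Your bookkeeping of the rescalings $D_{x_1}=(4/\beta)D_s$, $D_{t_a}=(1/c)D_y$, $a-1/a=-2\mathrm{i}\gamma/\beta$ is accurate and matches what the paper asserts without displaying.
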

In the last, we list one- and two-dark soliton solutions to the defocusing CSP equation (\ref{dCSP}).
By taking $p_{1}=e^{-\mathrm{i}(\varphi_{1}+\pi/2)}$ and $N=1$ in (\ref{ndark-tau}), we have the tau functions
for one-dark soliton solution,
\begin{equation}
f=1+e^{2\eta _{1}},\quad g=1+e^{2(\eta _{1}-\mathrm{i}\varphi _{1})}\,,
\end{equation}
where
\begin{equation*}
2\eta _{1}=- \frac{ \beta \sin \varphi _{1}}{2} \left( s + \frac {2y}{\beta \cos \varphi _{1} -\gamma} \right)-\ln (p_{1}+p_{1}^{\ast}) \,.
\end{equation*}
This leads to a one-dark soliton solution of the following parametric form
\begin{equation}  \label{CSP1solitona}
q= \frac{\beta}{2}\left((1+\mathrm{e}^{-2\mathrm{i}\varphi_{1}}) + (\mathrm{e}^{-2\mathrm{i}\varphi _{1}}-1)
\tanh \eta_{1}  \right)\mathrm{e}^{\mathrm{i}( y +\gamma s/2)}\,,
\end{equation}
\begin{equation}  \label{CCD1solitonb}
x=  -\frac{\gamma}{2}y+\frac{\beta^2}{8}s+\frac{\beta  \sin \varphi_{1}
e^{2\eta _{1}}}{1+e^{2\eta _{1}}},\quad \,t=-s\,.
\end{equation}
\begin{figure}[htb]
\centering
\includegraphics[height=60mm,width=80mm]{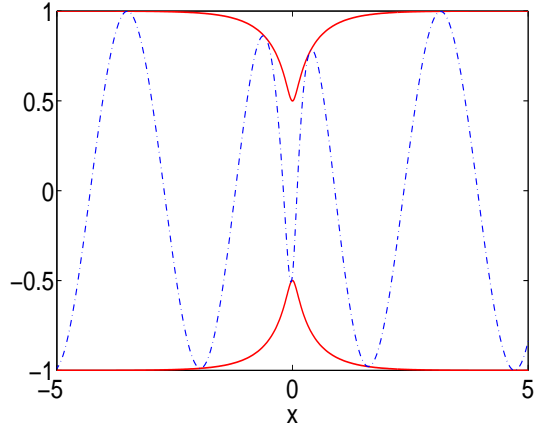}
\caption{A smoothed dark soliton with $\beta=2.0$, $\gamma=1.0$, $\varphi_1=2\pi/3$.}
\label{fig2}
\end{figure}
Obviously, the amplitude of background plane waves is $\beta/2$, the
depth of the trough is $\beta (1-|\cos \varphi_1|)/2$.

An example with $\beta=2.0$, $\gamma=1.0$, $\varphi_1=2\pi/3$ is
illustrated in Fig. 1. In this case, the envelope of the dark soliton is
smooth. However, as analysed in \cite{FenglingzhuPRE}, if we take
$\beta=(2+2 \sqrt{7})/3$ while keeping other parameters unchanged, the
dark soliton becomes
a cusped envelope soliton,
as shown in Fig. 2. In other words, the dark solution has more tendency
to become singular (
cusped
or looped one) as the amplitude of the background waves increases.
\begin{figure}[htb]
\centering
\includegraphics[height=60mm,width=80mm]{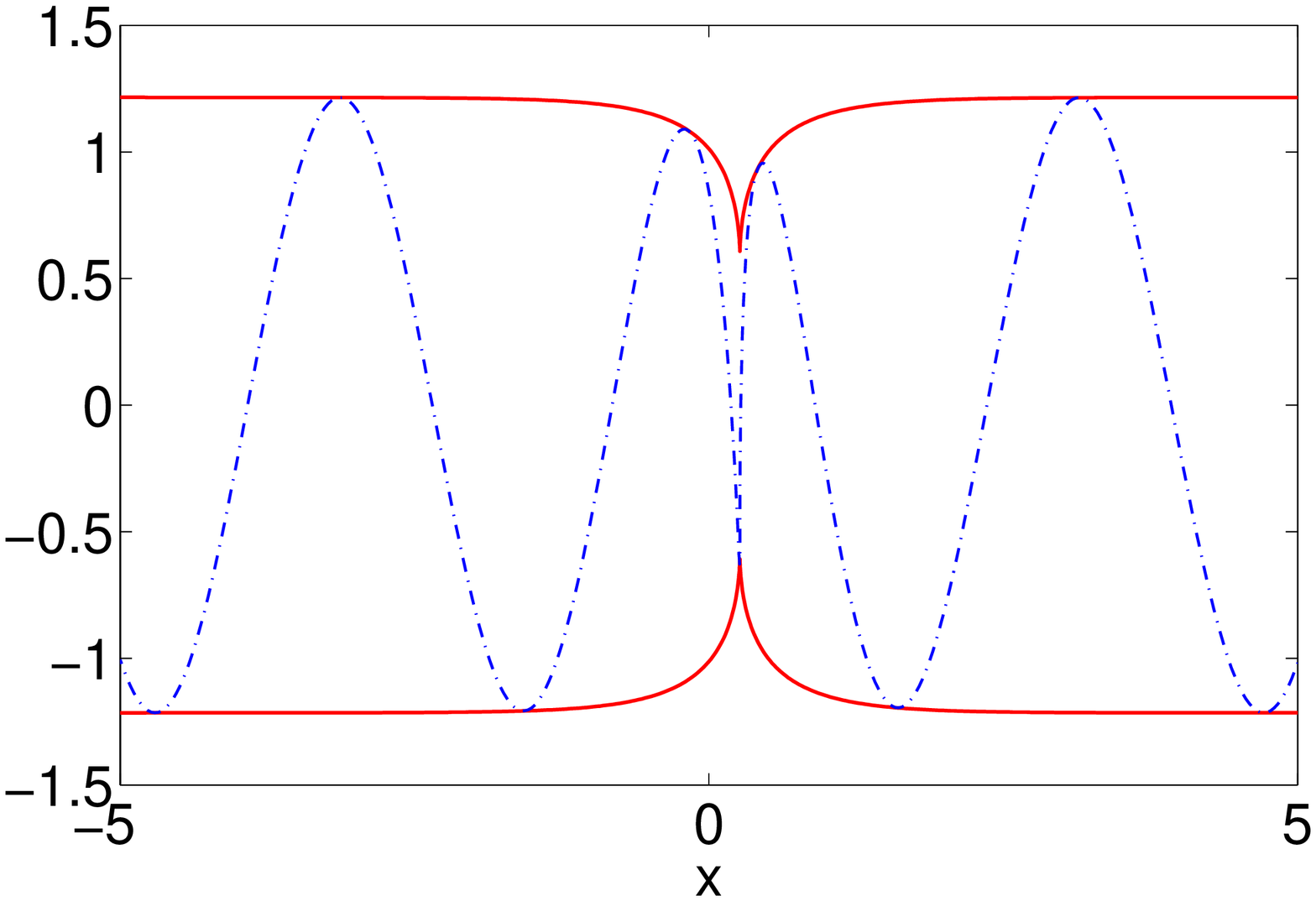}
 \caption{A
 cusped
 dark soliton with $\beta=(2+2\sqrt{7})/3$, $\gamma=1.0$, $\varphi_1=2\pi/3$.}
\label{fig3}
\end{figure}
From (\ref{ndark-tau}) with $N=2$, we obtain the tau functions for
two-soliton solution, 
\begin{eqnarray}
f &=&\left\vert
\begin{array}{cc}
1+\frac{1}{p_{1}+p_{1}^{\ast }{}}e^{\xi _{1}+\xi _{1}^{\ast }} & \frac{1}{%
p_{1}+p_{2}^{\ast }{}}e^{\xi _{1}+\xi _{2}^{\ast }} \\
\frac{1}{p_{2}+p_{1}^{\ast }{}}e^{\xi _{2}+\xi _{1}^{\ast }} & 1+\frac{1}{%
p_{2}+p_{2}^{\ast }{}}e^{\xi _{2}+\xi _{2}^{\ast }}%
\end{array}%
\right\vert \nonumber\\
&=&1+e^{2\eta _{1}}+e^{2\eta _{2}}+a_{12}e^{2(\eta _{1}+\eta _{2})},
\label{2solitong}
\end{eqnarray}

\begin{eqnarray}
g &=&\left\vert
\begin{array}{cc}
1+\frac{1}{p_{1}+p_{1}^{\ast }{}}(-\frac{p_{1}}{p_{1}^{\ast }})e^{\xi
_{1}+\xi _{1}^{\ast }} & \frac{1}{p_{1}+p_{2}^{\ast }{}}(-\frac{p_{1}}{%
p_{2}^{\ast }})e^{\xi _{1}+\xi _{2}^{\ast }} \\
\frac{1}{p_{2}+p_{1}^{\ast }{}}(-\frac{p_{2}}{p_{1}^{\ast }})e^{\xi _{2}+\xi
_{1}^{\ast }} & 1+\frac{1}{p_{2}+p_{2}^{\ast }{}}(-\frac{p_{2}}{p_{2}^{\ast }%
})e^{\xi _{2}+\xi _{2}^{\ast }}%
\end{array}%
\right\vert \nonumber\\
&=&1+e^{2(\eta _{1}-\mathrm{i}\varphi _{1})}+e^{2(\eta _{2}-\mathrm{i}%
\varphi _{2})}+a_{12}e^{2(\eta _{1}+\eta _{2}-\mathrm{i}\varphi _{1}-\mathrm{%
i}\varphi _{2})},
\end{eqnarray}%
where
\begin{equation*}
2\eta _{j}=- \frac{ \beta \sin \varphi _{1}}{2} \left( s + \frac {2y}{\beta \cos \varphi _{j} -\gamma} \right)-\ln (p_{j}+p_{j}^{\ast})\,,\quad j=1,2\,,
\end{equation*}
\begin{equation*}
a_{12} =\frac{\sin ^{2}\left( \frac{\varphi
_{2}-\varphi _{1}}{2}\right) }{\sin ^{2}\left( \frac{\varphi _{2}+\varphi
_{1}}{2}\right) }\,.
\end{equation*}
The collision of two-dark solitons to the defocusing CSP equation is always elastic, whose analysis is given in \cite{FenglingzhuPRE}.

 \section{Concluding Remarks}
In \cite{FenglingzhuPRE}, a defocusing CSP equation was derived from
physical context in nonlinear optics as an analogue of the NLS equation in ultra-short pulse regime.
In the present paper, we have established a link between the defocusing
CSP equation and the motion of space curves
in Minkowski space and the complex sinh-Gordon equation by a hodograph (reciprocal) transformation.
We have also derived the CSP equation of both focusing and defocusing type from the fundamental forms of surfaces with non-constant Gaussian curvature, from which the curve flows are formulated.
Secondly, starting from a set of bilinear equations, along with their tau functions,
of a single-component extended KP hierarchy, we have derived the defocusing CSP equation
based on the KP-hierarchy reduction method. Meanwhile, its multi-soliton
solutions have been provided in determinant form.

Even though we have recently constructed various solutions including
bright soliton, dark soliton, breather and rogue wave solutions
to the the focusing and defocusing CSP
equation~\cite{Feng_ComplexSPE,FengShen_ComplexSPE,FLZPhysD,FenglingzhuPRE}
including the work in the present paper,
it will be more interesting to investigate all kinds of
solutions to the coupled CSP equation,
especially the one of mixed focusing and defocusing nonlinearity.
In the last, although integrable discretizations of the real short pulse
equation and its multi-component generalizations were recently
constructed~\cite{FMO-PJMI,FengSPEdiscrete,Fengplanecurves,FCCMOcSP,FMOmultiSP},
how to understand and reconstruct the integrable discretizations of the
focusing and defocusing CSP equation geometrically and algebraically
and relate them with a general frame work set in
~\cite{BobenkoSurisbook,SchiefdiscretePLR}
remains a topic to be explored in the future.

\section*{Acknowledment}
We thank for the reviewers' comments which helped us to improve the
manuscript significantly.
BF appreciate the useful discussion on geometric part of the paper with Dr. Zhiwei Wu.
BF appreciated the partial support by the National Natural Science
Foundation of China (No. 11428102).
The work of KM is partially supported by JSPS Grant-in-Aid for Scientific Research (C-15K04909)
and CREST, JST.
The work of YO is partly supported by JSPS Grant-in-Aid for Scientific
Research (B-24340029, C-15K04909) and
for Challenging Exploratory Research (26610029).


\begin{thebibliography}{99}
\bibitem{Feng_ComplexSPE} B.-F. Feng, Complex short pulse and coupled
complex short pulse equations, \emph{Physica D} 297: 085202 (2015).

\bibitem{KYKPRE14} V. K. Kuetche, S. Youssoufa and T. C. Kofane, Ultrashort optical waveguide excitations in uniaxial silica fibers: Elastic collision scenarios, \emph{Phys. Rev. E} 90: 063203 (2014).

\bibitem{SPE_Org} T. Sch\"afer and C. E. Wayne, Propagation of ultra-short
optical pulses in cubic nonlinear media, \emph{Physica D} 196: 90--105
(2004).

\bibitem{Yarivbook} A. Yariv and P. Yeh, \textit{Optical waves in crystals: propagation
and control of laser radiation}, (Wiley-Interscience, Hoboken, 1983).

\bibitem{FengShen_ComplexSPE} S. Shen, B.-F. Feng and Y. Ohta, From the real
and complex coupled dispersionless equations to the real and complex short
pulse equations, \emph{Stud. Appl. Math.} 136:  64--88 (2016).

\bibitem{FLZPhysD} L. Ling, B.-F. Feng and Z. Zhu, Multi-soliton, multi-breather and higher order rogue wave solutions to the complex short pulse equation, \emph{Physica D}, 327: 13--29 (2016).

\bibitem{FMO-PJMI} B.-F. Feng, K. Maruno and Y. Ohta, Self-adaptive moving
mesh schemes for short pulse type equations and their Lax pairs,
\emph{Pacific Journal of Mathematics for Industry} 6: 1--14 (2014).

\bibitem{Sakovich2} A. Sakovich and S. Sakovich, Solitary wave solutions of the
short pulse equation, \emph{J. Phys. A} 39: L361--L367 (2006).

\bibitem{Matsuno_SPE} Y. Matsuno, Multisoliton and multibreather solutions
of the short pulse model equation, \emph{J. Phys. Soc. Jpn.} 76: 084003
(2007).

\bibitem{Kodamabook} A. Hasegawa and Y. Kodama, \textit{Solitons in optical
communications}, (Oxford Univ. Press, Oxford, 1995).
	
\bibitem{Agrawalbook} G. P. Agrawal, \textit{Nonlinear fiber
	optics}, (Academic Press, San Diego, 2001).

\bibitem{Ablowitzbook} M. J.  Ablowitz, \textit{Nonlinear
	Dispersive waves: asymptotic analysis and solitons},  (Cambriege
	Univ. Press, Cambrige, 2011).

\bibitem{Rothenberg} J. E. Rothenberg, Space-time focusing: breakdown of the
slowly varying envelope approximation in the self-focusing of femtosecond
pulses, Opt. Lett. 17: 1340--1342 (1992).

 \bibitem{Hoissen_CSPE} A. Dimakis and F. M\"uller-Hoissen,
	 Bidifferential Calculus Approach to AKNS Hierarchies and Their Solutions, \emph{SIGMA} 6: 055 (2010).

\bibitem{Matsuno_CSPE} Y. Matsuno,  A novel multi-component generalization of the short pulse equation and its multisoliton solutions, \emph{J. Math. Phys.} 52: 123702  (2011).


 \bibitem{FenglingzhuPRE}B.-F. Feng, L. Ling and Z. Zhu,
	 A defocusing complex short pulse equation and its multi-dark soliton solution by Darboux transformation, \emph{Phys. Rev. E} 93: 052227 (2016)


\bibitem{Lamb} G. L. Lamb, Solitons on moving space curves, \emph{J. Math.
Phys.} 18: 1654--1661 (1977).

 \bibitem{Hasimoto}
	 H. Hasimoto, A soliton on a vortex filament, \emph{J. Fluid Mech.} 51:477--485 (1972).

\bibitem{GoldsteinPRL} R. E. Goldstein and D. M. Petrich, The Korteweg-de
Vries hierarchy as dynamics of closed curves in the plane, \emph{Phys. Rev.
Lett.} 67: 3203--3206 (1991).

\bibitem{NakayamaPRL} K. Nakayama, H. Segur and M. Wadati, Integrability
and the Motion of Curves, \emph{Phys. Rev. Lett.} 69: 2603--2606 (1993).

\bibitem{DoliwaPLA94} A. Doliwa and P. M. Santini, An elementary geometric
characterization of the integrable motions of a curves, \emph{Phys. Lett. A}
185: 373--384 (1994).

\bibitem{NakayamaJPSJ98} K. Nakayama, Motion of curves in hyperboloid in Minkowski space, \emph{J. Phys. Soc. Jpn.} 67: 3031--3037 (1998).

\bibitem{Calini00} A. Calini, Recent developments in integrable curve dynamics, In Geometric approaches
to differential equations, pages 56-–99. Cambridge Univ. Press, Cambridge,
2000.

\bibitem{Ivey} T. A. Ivey, Integrable geometric evolution equations for curves,
 \emph{Contemp. Math.} 285: 71-–84 (2001).

\bibitem{QuChouPhyD} K.-S. Chou and C. Qu, Integrable equations arising from motions of plane curves, \emph{Physica D} 162: 9–-33 (2002).


\bibitem{Sasaki79} R. Sasaki, Soliton equations and pseudospherical
	surfaces, \emph{Nucl Phys B} 154: 343--357 (1979).

\bibitem{Chern80}  S.~S. Chern, Geometrical interpretation of
	sinh-Gordon equation, \emph{Ann. Polon. Math.} 39: 74-–80 (1980).

 \bibitem{Symsoliton} A. Sym, Soliton surfaces and their applications, in R. Martini, ed, \emph{Geometric Aspects of the Einstern Equations and  Integrable Systems (Lecture Notes in Physics No.239)}, 154--231, Springer, Berlin (1985).

\bibitem{Tafel} J. Tafel, Surfaces in $\mathbf{R}^3$ with prescirbed curvature, \emph{J.Geom. Phys.} 294:  1--10 (1995)

 \bibitem{SymVII} A. Sym, O. Ragnisco, D. Levi and M. Bruschi,
	 Soliton Surfaces VII.--Relativistic string in external field -
	 General integral and particular solutions,
	 \emph{Lett. Nuovo Cim.} 44: 529--526 (1985).

\bibitem{Symloopsoliton} A. Sym, Soliton Surfaces V.-Geometric theory of loop solitons, \emph{Lett.
Nuovo Cim.} 41 (1984) 33--40.

\bibitem{Terng97} C. L. Terng, Soliton equations and differential geometry, \emph{J. Differential Geometry} 45: 407–-445 (1997).

\bibitem{RogerSchiefbook} C. Rogers and W. K. Schief, \textit{B\"acklund and Darboux transformations: geometry and modern applications in soliton
	theory}, (Cambridge Univ. Press, Cambridge, 2002).

\bibitem{Anco16} K. Alkan and S. C. Anco, Integrable systems from inelastic curve flows in 2– and 3– dimensional Minkowski space, \emph{J. Nonlinear Math. Phys.} 23: 256--299 (2016).

\bibitem{KonnoKakuhata2} H. Kakuhata and K. Konno, A generalization of coupled
integrable, dispersionless system, \emph{J. Phys. Soc. Jpn.} 65: 340--341
(1996).

 \bibitem{JWang1} A.~V. Mikhailov, G. Papamikos and J.~P. Wang,
	 Dressing method for the vector sine-Gordon equation and its soliton interactions, \emph{Physica D} 325: 53–62 (2016).

\bibitem{CsG1}H. Aratyn, L. A. Ferreira, J. F. Gomes and A. H. Zimerman, The complex sine-Gordon equation as a symmetry flow of the AKNS hierarchy, \emph{J. Phys. A} 33: L331--L337 (2000).

\bibitem{PavlovPLA} A.~M. Kamchatnov and M. V. Pavlov, On generating functions in the AKNS hierachy, \emph{Phys. Lett. A} 301: 269--274 (2002).

\bibitem{DajunPhysD} D.~J. Zhang, J. Ji and S. L. Zhao,  Soliton scattering with amplitude changes of a negative order AKNS equation, \emph{Physica D} 238: 2361-–2367 (2009).





\bibitem{SteudalPLA} H. Steudal,  Space-time symmetry of self-induced transparency and of stimulated Raman scattering, \emph{Phys. Lett. A} 156: 491–-492 (1991).

\bibitem{KaupPhysD} D. Kaup,  The method of solution for stimulated
	Raman scattering and two-photon propagation, \emph{Physica D} 6:
	143-–154 (1983).

\bibitem{SteudalPhysD} H. Steudal,  Solitons in stimulated Raman
	scattering and resonant two-photon propagation, \emph{Physica D}
	6:
	155-–178 (1983).

 \bibitem{LundRegge76} F. Lund and T. Regge, Unified approach to strings and vortices with soliton solutions, \emph{Phys. Rev. D} 14: 1524--1535 (1976).

\bibitem{PohlmeyerCMP76} K. Pohlmeyer, Integrable Hamiltonian systems and interactions through quadratic constraints, \emph{Comm. Math. Phys.} 46: 207--221 (1976).

\bibitem{Lund77} F. Lund, Note on the geometry of the nonlinear σ model in two dimensions, \emph{Phys. Rev. D} 15: 1540--1543 (1977).

\bibitem{Lund78} F. Lund, Classically Solvable Field Theory Model, \emph{Annals
Phys.} 115: 251--268 (1978).






\bibitem{Lopez} R. Lopez, Differential Geometry of Curves and Surfaces
	in Lorentz-Minkowski space,
	\emph{International Electronic Journal of Geometry}  7:44--107 (2014).


\bibitem{Hirotabook} R. Hirota, 2004 \textit{The direct method in soliton
theory}, (Cambridge Univ. Press, Cambrige, 2004).




\bibitem{Sato} M. Sato, Y. Sato, Soliton equations as dymamical systems on
infinite dimensional Grassmann manifold, in \textit{Nonlinear PDE in Applied
Science, U.S.-Japan Seminar, Tokyo,} (\textit{Lecure Notes in Num. Appl.
Anal.} \textbf{5}:259--271 (North-Holland, New York, 1983).

\bibitem{JM} M. Jimbo and T. Miwa, Solitons and infinite-dimensional Lie
algebras, \emph{Publ. Res. Inst. Math. Sci.}  19: 943--1001 (1983).
	
\bibitem{MiyakeOhtaGram} S. Miyake, Y. Ohta and J. Satsuma, A
representation of solutions for the KP hierarchy and its algebraic
structure, \emph{J. Phys. Soc. Jpn.} {59}: 48--55 (1990).


\bibitem{FengSPEdiscrete} B.-F. Feng, K. Maruno and Y. Ohta, Integrable
discretizations of the short pulse equation, \emph{J. Phys. A} 43: 085203
(2010).

\bibitem{Fengplanecurves} B.-F. Feng, J. Inoguchi, K. Kajiwara, K. Maruno and
Y. Ohta, Discrete integrable systems and hodograph transformations arising
from motions of discrete plane curves, \emph{J. Phys. A} 44: 395201 (2011).

\bibitem{FCCMOcSP} B.-F. Feng, J. Chen, Y. Chen, K. Maruno and Y. Ohta,
Integrable discretizations and self-adaptive moving mesh method for a coupled short pulse equation,
\emph{J. Phys. A} 48: 385202 (2015).

\bibitem{FMOmultiSP} B.-F. Feng, K. Maruno and Y. Ohta, Integrable
discretization of a multi-component short pulse equation, \emph{J. Math. Phys.} 56: 043502
(2015).

\bibitem{BobenkoSurisbook} A. I. Bobenko and Y. B. Suris,
	\textit{Discrete differential geometry}, (American Mathematical
	Society, Providence, 2008).

\bibitem{SchiefdiscretePLR} W. K. Schief, Discrete Chebyshev nets and a universal permutability
theorem, \emph{J. Phys. A: Math. Theor.} 40:  4775–4801 (2007).	
\end{thebibliography}
\end{document}